\newtheorem{theorem}{Theorem}
\newtheorem{lemma}{Lemma}
\newtheorem{corollary}{Corollary}
\newcommand{\vect}[1]{\mathbf{#1}}
\def\tr{\mathrm{tr}}
\def\Htran{\mbox{\tiny $\mathrm{H}$}}
\def\CN{\mathcal{N}_{\mathbb{C}}} %Complex Gaussian distribution
\begin{document}

%\title{Performance Evaluation of Dense Massive MIMO Networks in Spatially Correlated Channels with Multicell Processing}
\title{Spectral Efficiency of Dense Multicell Massive MIMO Networks in Spatially Correlated Channels}
\author{
\IEEEauthorblockN{FahimeSadat Mirhosseini\IEEEauthorrefmark{1}, Aliakbar Tadaion\IEEEauthorrefmark{1}, S. Mohammad Razavizadeh\IEEEauthorrefmark{2}}\\
\IEEEauthorblockA{\IEEEauthorrefmark{1}\small{Department of Electrical Engineering, Yazd University, Yazd, Iran}}\\
\IEEEauthorblockA{\IEEEauthorrefmark{2}\small{School of Electrical Engineering, Iran University of Science \& Technology, Tehran, Iran}}
%\vspace{-.9cm}
}

\maketitle

\begin{abstract}
This paper is on the spectral efficiency (SE) of a dense multi-cell massive multiple-input multiple-output (MIMO).  The channels are spatially correlated and the multi-slope path loss model is considered. 
In our framework, the channel state information is obtained by using pilot sequences and the BSs are deployed randomly. %each equipped with $M$ antennas and serve $K$ single-antenna user equipment (UE)s. 
First, we study the channel estimation accuracy and its impact on the SE as the BS density increases and the network becomes densified. Second, we consider the special case of uncorrelated channels for which the stochastic geometry framework helps us to simplify the SE expressions, and obtain the minimum value of antenna-UE ratio over which the pilot contamination is dominant rather than the inter- and intra-cell interference.
Finally, we provide some insights into the obtained SE for the spatially correlated channels, from a multi-cell processing scheme as well as the single-cell ones in terms of the BS density. Our results show that while all the detectors result in non-increasing SE in terms of the BS density, their area SE increases exponentially as the network becomes densified. 
Moreover, we conclude that in order to achieve a given SE, the required value of antenna-UE ratio decreases as the level of channel correlation increases.
\end{abstract}

%\begin{keywords}
%Massive MIMO, dense networks, spectral efficiency, spatially correlated channels, multi-cell processing, stochastic geometry, network densification.
%\end{keywords}

\section{Introduction}
The next generation of the wireless cellular communications is expected to encounter an explosive demand for mobile data traffic. One way to meet these demands is \textit{network densification}, which means to reduce the cell sizes by deploying more base station (BS)s in a certain geographical area (i.e., higher BS density) \cite{AndrewsMagazine}. %Increasing the BS density results in a network which can be characterized by a dense and irregular structure, 
Dense networks can be modeled by a homogeneous Poisson point process (H-PPP) and analyzed by stochastic geometry tools \cite{Andrews2014a}.
Another promising technology towards the capacity improvement  is \textit{massive multiple-input multiple-output (MIMO)} \cite{marzetta2010noncooperative,Larsson2014,xin2015area}, wherein the BSs are equipped with a large number $M$ of 
%low-power, and physically small 
antennas to serve a multitude $K \gg 1$ user equipment (UE)s by spatial multiplexing while $K \ll M$. These two schemes can provide significant area spectral efficiency (ASE), {defined as the sum of the 
spectral efficiency (SE)\footnote{The SE is defined as the maximum data rate (capacity) per unit bandwidth
in a specific communication system.
} of all UEs per unit area \cite{alouini1999area, mirhosseini2015performance}.}
%maximum data rate (capacity) per unit bandwidth per unit area and computed according to the spectral efficiency (SE) of each UE \cite{alouini1999area}. 

%\textcolor{blue}{\subsection{Main Literature}}
The vast majority of massive MIMO literature considers the network with two major simplifying assumptions which make the network analysis tractable \cite{pizzo2018network, bjornson2016deploying, rusek2013scaling, fahim}.
%\cite{pizzo2018network}-\cite{fahim}.
%\cite{pizzo2018network}, \cite{bjornson2016deploying}, \cite{rusek2013scaling}, \cite{fahim}. 
First, the propagation channels to the multi-antenna BSs are assumed spatially uncorrelated; whereas the practical channels are generally spatially correlated \cite{Gao2015a, Gao2015b, le2019review}. 
%The reason is that the antennas have non-uniform radiation patterns and also due to the physical propagation environment, we may have 
%some spatial directions which carry stronger signals from the transmitter to the receiver than other directions.
Second simplifying assumption is that the single-cell processing schemes designed for the single-cell operation
%(as precoders for downlink (DL) or detectors for uplink (UL)), 
can be easily employed for the multi-cell network.  Whereas the multi-cell processing schemes like multi-cell minimum mean squared error (M-MMSE) can achieve better performance in all scenarios. The M-MMSE is the optimum multi-cell processing scheme which maximizes the SINR and substantially improves the ASE \cite{massivemimobook}.
%\footnote{Although, there are some other multi-cell processing schemes such as multi-cell zero-forcing (ZF), we only consider the optimum one (M-MMSE) as the multi-cell processing scheme throughout this paper.} 
However, there are some literature which avoid these simplifying assumptions to show the network performance \cite{hoydis2013massive, adhikary2017uplink, You2015a, EmilEURASIP17, SanguinettiBH19}. 
%\cite{hoydis2013massive}-\cite{SanguinettiBH19}.
%\cite{hoydis2013massive}, \cite{adhikary2017uplink},  \cite{You2015a}, \cite{EmilEURASIP17}, \cite{SanguinettiBH19}. 
{For example, in \cite{hoydis2013massive}, the authors consider the UL and DL of the multi-cell network with the correlated channels, and then investigate the number of antennas required to achieve the fraction of the ultimate performance limit by considering maximum ratio (MR) and single-cell MMSE (S-MMSE).}
%with maximum ratio (MR), to achieve the performance of single-cell MMSE (S-MMSE) in the correlated channels for both UL and DL. 
Utilizing the correlated feature of fading, a pilot reuse scheme is proposed in \cite{You2015a} to reduce the pilot overhead. On the other hand, the pilot contamination impact on the massive MIMO can be resolved by employing M-MMSE in  the spatially correlated channel \cite{BHS18A,SanguinettiBH19}.
%\cite{adhikary2017uplink} multi cell schemes
%\cite{LiBLZW15}  Pilot Reuse 
%\cite{You2015a} 

Moreover, the densified network analysis is sensitive to the path loss model.
%On the other hand, the ASE under network densification is sensitive to the path loss model. 
In the dense network, a more realistic model accounts for a multi-slope path loss, wherein the path loss exponent varies with the distance between UEs and BSs \cite{zhang2015downlink,pizzo2018network}.
{If  a distance-independent or single-slope path loss model is adopted, the ASE is a linearly increasing function of the BS density for single-input single-output 
%(SISO) 
networks\cite{andrews2011tractable}. Whereas,  we may observe different trends in the ASE ranging from linear growth to collapse under network densification by adopting more realistic path loss model \cite{alammouri2018unified, ding2017performance}.
Moreover, an analytical expression for the energy efficiency  is obtained in a 
%general
 multi-slope model using some approximations in \cite{pizzo2018network} and the coverage probability is investigated  in \cite{Kountouris2016} for dual-slope.}

{However, it should be noted  that the impact of network densification (i.e., increasing BS density) is not addressed  for the mentioned general and also practical framework in the recent works. Motivated by the above discussion, this paper investigates the UL of a dense massive MIMO network in which the $M$-antenna BSs are distributed in a given area according to an H-PPP of a determined density and each of them serves $K$ single-antenna UEs. The channel is spatially correlated and the path loss has a multi-slope model. The channel state information (CSI) is estimated using some orthogonal pilots which are reused in different cells with a pilot reuse factor. In this network, the SE and ASE performances are evaluated in terms of BS density under single-cell and multi-cell detection schemes. Since the SE per UE
%\footnote{We may ignore per UE; however, throughout this paper, the SE conveys the SE per UE.}
 is affected by the channel estimation error, we first investigate the behavior of the channel estimation accuracy based on the BS density with different values of pilot reuse factor. In the special case when the environment has non-line-of-sight (NLOS) propagation and the BS experiences many scattering objects, the channel model is simplified  to spatially uncorrelated fading. In this case, the lower bound on the SE obtained for MR and zero-forcing (ZF), helps us to analytically provide the interplay between the network parameters, such as antenna-UE ratio or $M/K$, BS density and pilot reuse factor. Remarkably, based on the network parameters, we obtain the region for which the pilot contamination dominates the inter- and intra-cell interference. We also realize that the SE of ZF has larger reduction rate than the SE of MR with respect to the BS density.
Moreover, we observe that in the dense network\footnote{In order to characterize the network performance based on the BS density, we consider different degrees of BS densification \cite{ge20165g}: low dense with roughly a BS density $\le 10\ {\rm BS/km^2}$ and dense with the BS density $ > 10\ {\rm BS/km^2}$.}, SE obtained by MR and ZF does not converge to their ultimately achievable rates (i.e., with infinitely many antennas, $M \rightarrow \infty$) even through increasing $M / K$ (for example up to 50). This issue can be addressed by employing multi-cell processing, M-MMSE which is designed to reduce the interference from the UEs in all cells. In the correlated channel, we show that M-MMSE has the superior performance compared with the single-cell detectors specially in the dense networks wherein a large level of interference is experienced. Moreover, the interplay between the network parameters, such
as $M / K$ and BS density with the level of channel correlation is derived. Our main contributions can be summarized as follows:
\begin{itemize}
\item
We derive the analytic form of the channel estimation accuracy and evaluate its behavior in terms of BS density and pilot reuse factor.
\item
We analyze the SE and ASE performance of a multi-cell massive MIMO network in a more realistic channel model which includes spatial correlation in each user's channel and is adopted with a multi-slope path loss model.
\item
We study both low dense and dense massive MIMO networks and investigate the effect of network densification on the network performance (in terms of SE and ASE). 
\item
We compare the behavior of single-cell and multi-cell processing schemes in terms of network densification.
\end{itemize}}

The remaining of the paper is organized as follows. In Section \ref{sec:Problem_Statement}, we define the network model with pilot allocation scheme, power control policy and channel estimation. In Section \ref{sec:net_analyz}, we compute the channel estimation accuracy in both correlated and uncorrelated Rayleigh fading in terms of the BS density and provide the SE and ASE for both types of channels with single-cell and multi-cell combining schemes. 
We illustrate the simulation results in Section \ref{sec:simulationresults} and finally, we conclude the paper in Section \ref{sec:conclusion}.

{\textit{Notation}: We use upper and lower bold face letters for matrices and column vectors, respectively. The operator $\left(. \right)^{H}$ refers to conjugate transpose and Euclidean norm vector operator is shown by  $||.||$.
The $M \times M$ identity matrix is denoted by $\mathbf{I}_M$ and $\mathbf{0}$ is the zero vector. {We use $\mathbb{E}_{\mathbf{x}}\{ .\}$ as the expectation operator with respect to the
random vector $\mathbf{x}$.}
%The random vectors of channel, the distance between two nodes and the pilot allocation are represented by $\mathbf{h}$, $\mathbf{d}$ and $\mathbf{a}$, respectively. 
Moreover, we have $\Gamma(a, x) = \int_{x}^{\infty} e^{-t} t^{a - 1} dt$ as the upper incomplete gamma function. $\mathbb{R}^2$ and $\mathbb{N}$ denote the two-dimensional real-valued and positive integer-valued vector spaces.}

%\begin{figure}%[t!]
% \centering
% \includegraphics[width=0.4\textwidth]{Figs/capture2.png} 
% \caption{\textcolor{red}{Deployment of a cellular network with BSs drawn from a H-PPP, $\vect{\Psi}_{\lambda}$ with $K = 10$ and $\zeta = 4$.}}
% \label{fig:SGmodel}
% \end{figure}

\section{Network Model}\label{sec:Problem_Statement}

We consider the uplink of a massive MIMO network wherein each BS has $M \gg 1$ antennas and serves $K$ single-antenna UEs by the nearest BS association rule. The BSs are spatially distributed at locations ${\vect{x}_l\subset \mathbb{R}^2}$ according to an H-PPP ${\vect{\Psi}_{\lambda} = \{\vect{x}_l; \, l \in \mathbb{N} \}}$ of density ${\lambda}$ $[\rm{BS/km^2}]$ while their $K$ connected UEs are uniformly distributed in their coverage area which is a Poisson-Voronoi cell as shown in Fig.~\ref{fig:SGmodel}.
The network operates according to the synchronous time-division-duplex protocol over each coherence block composed of $\tau_c = B_c T_c$ samples where $B_c$ and $T_c$ are the channel coherence bandwidth and channel coherence time, respectively. The CSI is acquired using $\tau_p = K \zeta$ UL pilot sequences in each coherence block with $\zeta \geq 1$ is the pilot reuse factor.

\begin{figure}%[t!]
 \centering
 \includegraphics[width=0.6\textwidth]{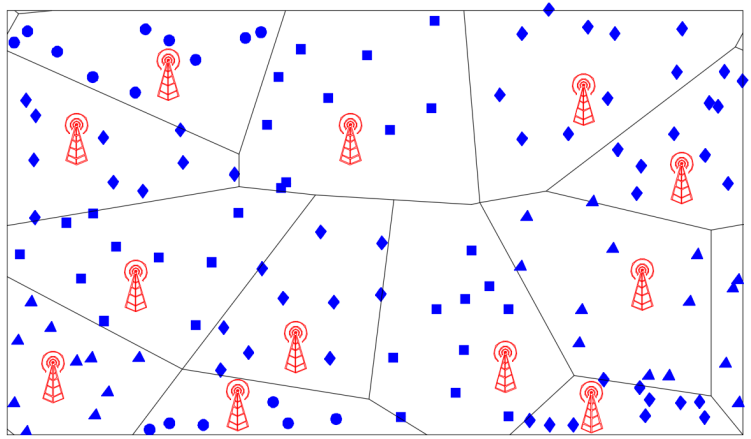} 
 \caption{{Deployment of a cellular network with BSs drawn from a H-PPP, $\vect{\Psi}_{\lambda}$ with $K = 10$ and $\zeta = 4$.}}
 \label{fig:SGmodel}
 \end{figure}

\subsection{Channel model}\label{subsec:channel_model}
The channel $\vect{h}_{li}^{j} \in \mathbb{C}^{M}$ between the UE~$i$ in the cell~$l$ and the BS~$j$ is modeled as correlated Rayleigh fading:
\begin{align}\label{eq:iid-Rayleigh fading}
\vect{h}_{li}^{j} \sim \CN( \vect{0}_M, \vect{R}_{li}^{j}),
\end{align}
{where the Gaussian distribution accounts for the random
small-scale fading realization in each coherence block, and
 $\vect{R}_{li}^{j} \in \mathbb{C}^{M \times M}$ is the spatial correlation matrix. {The correlation matrix is known to
change on a time-scale much larger than the coherence time
of the fading channel; hence, for the sake of simplicity,  we assume that  $\vect{R}_{li}^{j}$ is fixed and known at the BS}. Practical methods for its estimation can be found in \cite{Bjornson2016c,Vorobyov2018}.
% and the models for generation of $\vect{R}_{li}^{j}$ can be found in \cite{massivemimobook, joung2016channel}.
{In general, there are various correlation models proposed in the literature; for example, the equal, exponential, Bessel and one-ring correlation models  \cite{le2019review, BHS18A, SanguinettiBH19}.
In this paper, we use the most common 2D one-ring channel model for a uniform linear array (ULA)  with half-wavelength spacing and average path loss $\beta_{li}^j$.} For an angle-of-arrival (AoA) $\varphi_{li}^j$, the scatterers are uniformly distributed in $[\varphi_{li}^j -\frac{\Delta}{2}, \varphi_{li}^j + \frac{\Delta}{2}]$ with $\Delta$ being the angular spread. The 
%$(m_1,m_2)$th
element of $\vect{R}_{li}^j$ is \cite{massivemimobook}:
\begin{align}\label{eq:2DChannelModel}
 \left[ \vect{R}_{li}^j(d_{li}^j, \varphi_{li}^j) \right]_{m_1,m_2}  = \frac{\beta_{li}^j(d_{li}^{j})}{2\Delta} \int_{-\Delta}^{\Delta}{  e^{\mathsf{j} \pi(m_1-m_2) \sin(\varphi_{li}^j + {\varphi}) }}d{\varphi}.
\end{align}
{where  $d_{li}^j$ is the average distance of the UE $i$ in the cell
$l$ from the BS $j$.} %and is much larger than the wavelength.} %, especially in millimeter-wave communications.} 
%\footnote{\textcolor{red}{To be more exact, $d_{li}^j$ is the distance between the center of the antenna array in the BS $j$ and the center of the UE $i$ in cell $l$ and is assumed much larger than the wavelength.}}.
\begin{corollary}
{In the special case that the scatterers are uniformly distributed in the interval $[-\pi, \pi]$, the channel is modeled as an uncorrelated Rayleigh fading with the channel correlation matrix  $\mathbf{R}_{li}^{j} = \beta_{li}^{j}(d_{li}^{j}) \mathbf{I}_M$}.
\end{corollary}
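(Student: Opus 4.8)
The plan is to substitute the maximal angular range directly into the one‑ring entry formula \eqref{eq:2DChannelModel} and evaluate it component by component. Taking the scatterers uniform on $[-\pi,\pi]$ amounts to setting $\Delta=\pi$ there, so the $(m_1,m_2)$ entry becomes $\tfrac{\beta_{li}^j(d_{li}^j)}{2\pi}\int_{-\pi}^{\pi}e^{\mathsf{j}\pi(m_1-m_2)\sin(\varphi_{li}^j+\varphi)}\,d\varphi$. The diagonal case $m_1=m_2$ is immediate: the integrand is identically $1$, the integral equals $2\pi$, it cancels the prefactor, and the entry is exactly $\beta_{li}^j(d_{li}^j)$.

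For $m_1\neq m_2$ I would proceed in two steps. First, write $n=m_1-m_2\in\mathbb{Z}\setminus\{0\}$ and remove the dependence on the nominal AoA: the integrand is $2\pi$‑periodic in $\varphi$, so the shift $\psi=\varphi_{li}^j+\varphi$ leaves the integral over a full period unchanged and reduces the entry to $\tfrac{\beta_{li}^j}{2\pi}\int_{-\pi}^{\pi}e^{\mathsf{j}\pi n\sin\psi}\,d\psi$, which no longer sees $\varphi_{li}^j$. Second, I would identify this full‑turn average with the rich‑scattering (isotropic) limit: in that regime it is the directional cosine $u=\sin\psi$, rather than the physical angle, that is uniform on $[-1,1]$, and the average then collapses to the sinc integral $\tfrac12\int_{-1}^{1}e^{\mathsf{j}\pi n u}\,du=\sin(\pi n)/(\pi n)=0$ for every nonzero integer $n$. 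Hence all off‑diagonal entries vanish and $\vect{R}_{li}^j=\beta_{li}^j(d_{li}^j)\,\mathbf{I}_M$, as claimed.

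The one point I expect to need care — and which I would state explicitly rather than hide — is exactly that last reduction. Evaluated literally, the full‑turn integral $\tfrac{1}{2\pi}\int_{-\pi}^{\pi}e^{\mathsf{j}\pi n\sin\psi}\,d\psi$ is the Bessel value $J_0(\pi n)$ (seen at once from the Jacobi–Anger expansion, only the $k=0$ harmonic surviving integration over the period), which is \emph{small but not exactly zero} for $n\neq0$; e.g.\ $J_0(\pi)\approx-0.30$. So the identity $\vect{R}_{li}^j=\beta_{li}^j\mathbf{I}_M$ should be read in the standard isotropic/rich‑scattering idealization used throughout this literature — equivalently, as the maximal‑angular‑spread limit of \eqref{eq:2DChannelModel} in which the directional cosine, not the physical angle, is the uniformly distributed quantity (that is the sense in which one \emph{defines} the uncorrelated model). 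With that convention pinned down, everything else is the routine one‑line computation above.
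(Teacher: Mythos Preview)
The paper states this corollary without proof, so there is no argument to compare against; your proposal supplies the missing computation and, importantly, flags a genuine subtlety that the paper glosses over.

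Your analysis is correct on every point. The diagonal case is trivial, the shift removing $\varphi_{li}^j$ is valid by periodicity, and you are right that the literal full-turn integral in \eqref{eq:2DChannelModel} with $\Delta=\pi$ evaluates to $J_0(\pi(m_1-m_2))$, which is nonzero for $m_1\neq m_2$. So the corollary, read as a strict consequence of the one-ring formula \eqref{eq:2DChannelModel}, is not exactly true; the off-diagonal entries are small (and oscillate/decay in the half-wavelength index separation) but do not vanish.

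Your resolution---interpreting ``scatterers uniformly distributed in $[-\pi,\pi]$'' as the isotropic rich-scattering idealization in which the directional cosine $u=\sin\psi$ is uniform on $[-1,1]$, so that the off-diagonal entry becomes $\mathrm{sinc}(m_1-m_2)=0$---is the standard convention in the massive-MIMO literature for \emph{defining} the uncorrelated model for a half-wavelength ULA, and it is almost certainly what the authors intend. I would keep exactly the structure you have: do the computation, identify the Bessel value, and then state explicitly that the equality $\vect{R}_{li}^j=\beta_{li}^j\mathbf{I}_M$ holds under the isotropic convention rather than as a literal specialization of \eqref{eq:2DChannelModel}. That is more precise than the paper itself.
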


%\begin{figure}%[t!]
% \centering
% \includegraphics[width=0.32\textwidth]{Figs/capture2.png} 
% \caption{{Deployment of a cellular network with BSs drawn from a H-PPP, $\vect{\Psi}_{\lambda}$ with $K = 10$ and $\zeta = 4$.}}
% \label{fig:SGmodel}
% \end{figure}

The normalized trace $\beta_{li}^{j}(d_{li}^{j}) = \frac{1}{M}\tr(\vect{R}_{li}^{j}(d_{li}^j, \varphi_{li}^j))$ represents the average channel gain from a generic antenna at the BS $j$ to the UE $i$ in the cell $l$ \cite{SanguinettiBH19}. 
We consider the realistic path loss model for the dense network for which  $\beta_{li}^{j}(d_{li}^{j})$ is computed according to a multi-slope path loss model  \cite{zhang2015downlink} where
%\begin{eqnarray}\label{pathloss_model}
%\beta_{li}^{j}(d_{li}^{j}) =  \begin{cases}
%\beta_{1,li}^{j}(d_{li}^{j}), & r_0 \leq d_{li}^{j} \leq r_1   \\ 
%%\beta_{2,li}^{j}(d_{li}^{j}), & r_1 < d_{li}^{j} \leq r_2   \\ 
%., & . \\
%., & . \\
%., & . \\
%\beta_{N,li}^{j}(d_{li}^{j}), &  r_{N-1} \leq d_{li}^{j} < r_{N}   \\  
%\end{cases}
%%\beta_{li}^{j} = \Upsilon_n (d_{li}^{j})^{-\alpha_n},
%\end{eqnarray}
$\beta_{li}^{j}(d_{li}^{j})  = \beta_{n,li}^{j}(d_{li}^{j}) =  \Upsilon_n (d_{li}^{j})^{-\alpha_n}$ with $r_{n-1} \leq d_{li}^{j} \leq r_n$
%$\beta_{n,li}^{j}(d_{li}^{j}) =  \Upsilon_n (d_{li}^{j})^{-\alpha_n}$
%${d_{li}^{j} \in [r_{n-1}, r_{n})}$, for ${n = 1, \ldots, N}$ 
denotes the distance of the UE $i$ in the cell $l$ from the BS $j$, ${\alpha_1 \leq \cdots \leq \alpha_N}$ are the path loss exponents, ${0 = r_0 < \cdots < r_N = \infty}$ refer to the distances at which a change in the power decadence occurs. Finally, the coefficients ${\Upsilon_{n+1} = \Upsilon_n r_{n}^{\alpha_{n+1} - \alpha_{n}}}$ for $n = 1, \ldots, N-1$ are selected to satisfy the continuity purposes of the model with $\Upsilon_1$ being a design parameter. The widely used single-slope path loss model can be obtained by $N = 1$,  i.e., ${\beta_{li}^{j} = \Upsilon_1 (d_{li}^{j})^{-\alpha_1}}$. Note that for the sake of simplicity, we remove all the dependency parameters and just consider $\beta_{li}^j$ instead of $\beta_{li}^{j}(d_{li}^{j})$ and  $\vect{R}_{li}^{j}$ instead of $\vect{R}_{li}^{j}(d_{li}^j, \varphi_{li}^j)$ .
{We use the the dual-slope path loss model, i.e.,  $N=2$, given in \cite{fahim} with parameters reported in Table~\ref{table1} in this paper. 
%The coefficient $\Upsilon_2$ is computed such that we have the continuity between the two regions. 
As discussed in \cite{AndrewsMagazine} and \cite{zhang2015downlink}, this model is a simplified version of the ITU-R UMi model \cite{3GPP}.}
%A dual-slope path loss model, i.e., $N=2$, is illustrated in Fig.~\ref{fig:PathLoss}, with parameters reported in Table~\ref{table1} in this paper\footnote{Typical values of $r_1$ for ITU-R UMi model range from $20$ to $200$ m \cite{AndrewsMagazine}. Moreover, we consider the case that $\alpha_1 > 2$ in this paper.}.
%%\textcolor{red}{It was supposed that you add some explanation on dual-slope here, with different statements.} 
%The coefficient $\Upsilon_2$ is computed such that we have the continuity between the two regions. As discussed in \cite{AndrewsMagazine, zhang2015downlink, Kountouris2016}, this model can be seen as a simplified version of the ITU-R UMi model \cite{3GPP}.

%\begin{figure}%[t!]
% \centering
% \includegraphics[width=0.6\textwidth]{Figs/capture2.png} 
% \caption{{Deployment of a cellular network with BSs drawn from a H-PPP, $\vect{\Psi}_{\lambda}$ with $K = 10$ and $\zeta = 4$.}}
% \label{fig:SGmodel}
% \end{figure}

\subsection{Pilot reuse policy}\label{subsec:pilotreuse}
The channel estimation in each cell $l$ is performed based on a pilot book $\{{\boldsymbol{\Phi}} \in \mathbb{C}^{\tau_p \times \tau_p}\}$ of $\tau_p$ orthonormal uplink pilot sequences ${\{\boldsymbol{\phi}_{li}\in \mathbb{C}^{\tau_p}\}_{i=1}^{\tau_p}}$ with $\boldsymbol{\phi}_{li}^H\boldsymbol{\phi}_{li}=\tau_p$.
We assume that in each coherence block, the BS $l$ picks uniformly a subset of $K$ different sequences from $\boldsymbol{\Phi}$ and assigns them to its served UEs 
%to avoid the cumbersome pilot coordination effect 
\cite{bjornson2016deploying}. This implies that a pilot reuse factor of ${\zeta \triangleq \tau_p / K \geq 1}$ is used throughout the network. 
%In other words, there are on average $\mathbb{E}\{\vect{\Psi}_\lambda\} / \zeta$ cells that share the same pilot subset where $\vect{\Psi}_\lambda$ is defined in Section \ref{sec:Problem_Statement}.
This pilot allocation could be modeled in each cell by a Bernoulli random variable ${a_{l' l} \sim \mathcal{B}({1/\zeta})}$ 
%\begin{equation} 
%{a_{l' l} \sim \mathcal{B}({1/\zeta})},
%\end{equation}
for ${l' \neq l}$ and ${a_{l l} = 1}$\cite{bjornson2016deploying}. Note that $a_{l' l} = 1$ represents that all UEs in the cell $l'$ have the same pilot subset of those in the cell $l$ and $a_{l' l} = 0$  shows that the UEs in the cell $l$ and $l'$ use different pilot subsets\footnote{In this paper, the scheme in which only some UEs of cell $l$ and $l'$ use the same pilot subset is not considered.}. The pilot-sharing UEs in the first case leads to the pilot contamination which is a known problem in the massive MIMO network.
% while in  the second case, there is no pilot contamination from the UEs in the cell $l'$ on those in the cell $l$ or vice versa.  

\begin{table}%[t!]
\centering
\caption{Dual-slope path loss model.} % \cite{ngo2017cell}
\begin{tabular}{l|l|l}
   $r_n$ & $\Upsilon_n$ & $\alpha_n$ \\
\hline
$r_1=100$m & $\Upsilon_1= 8.3e-04 $, $\Upsilon_2= 5.2481$& $\alpha_1=2.1$, $\alpha_2=4$ 
\end{tabular}\label{table1}
\end{table}

%\begin{figure}%[t!]
% \centering
% \includegraphics[width=0.4\textwidth]{Figs/PathLoss2.eps} 
% \caption{Multi-slope path loss [dB] for the parameters listed in Table~\ref{table1}.}
% \label{fig:PathLoss}
% \end{figure}

%\begin{table}%[t!]
%\centering
%\caption{Two-slope path loss model.} % \cite{ngo2017cell}
%\begin{tabular}{l|l|l}
%   $r_n$ & $\Upsilon_n$ & $\alpha_n$ \\
%\hline
%$r_1=100$m & $\Upsilon_1= 8.3e-04 $, $\Upsilon_2= 5.2481$& $\alpha_1=2.1$, $\alpha_2=4$ 
%\end{tabular}\label{table1}
%\end{table}

\subsection{Power control policy and channel estimation}\label{subsec:powercontrol}
%Power control is necessary in the uplink of multiuser MIMO systems to make balance between received powers of the signals from nearby and distant UEs. 
We consider a statistical channel inversion as power control policy such that the UE $i$ in the cell $l$ has the transmit power 
%\cite{bjornson2016deploying}
\begin{equation} \label{ULpower}
p_{li} = \rho_0/{\beta_{li}^{l}},
\end{equation}
where $\rho_0$ is a design parameter \cite{bjornson2015massive}. This ensures a uniform signal-to-noise ratio (SNR) at the BS as ${\mathsf{SNR}_0 = \rho_0/\sigma^2}.$ %follows 
%\begin{equation}\label{DesiredSignalPower}
%{\mathsf{SNR}_0 = \rho_0/\sigma^2}.
%\end{equation}
The same power control policy is used for pilot transmission phase with transmit power of the UE $i$ in the cell $l$ as $p^{\rm p}_{li} = \rho_{\tr}/{\beta_{li}^{l}}$ and $\rho_{\tr} > \rho_0$ is a design parameter. Then the SNR at the BS in the pilot phase is ${\mathsf{SNR}_{\tr} = \rho_{\tr}/\sigma^2}$. % as follows 
%\begin{equation}\label{DesiredSignalPower_P}
%{\mathsf{SNR}_{\tr} = \rho_{\tr}/\sigma^2}.
%\end{equation}

During the pilot transmission phase, the received signal $\vect{Y}^{\rm p}_{j} \in \mathbb{C}^{M \times \tau_p}$ at the BS $j$ is
\begin{eqnarray}\label{receivedpilot1}
\vect{Y}^{\rm p}_{j} =  \sum\limits_{i= 1}^K \sqrt{p^{\rm p}_{jk}} \vect{h}_{ji}^{j} \boldsymbol{\phi}_{jk}^{T} +  \sum_{l\in {\vect{\Psi}}_{\lambda} \setminus {\{j\}}} \sum\limits_{i = 1}^K \sqrt{p^{\rm p}_{li}} \vect{h}_{li}^{j} \boldsymbol{\phi}_{li}^{T} + \mathbf{N}_{j},
\end{eqnarray}
where $\mathbf{N}_{j} \in \mathbb{C}^{M \times \tau_p}$ is the additive receiver noise with i.i.d. elements distributed as $\mathcal{N}_{\mathbb{C}}(0, \sigma^2)$. 
The linear MMSE
estimate of $\vect{h}_{li}^{j}$ is obtained by  $\vect{Y}^{\rm p}_{j} \boldsymbol{\phi}_{li}^{*}$ as follows  \cite{long2018interference} %\cite[Sec. 3]{massivemimobook}
\begin{eqnarray}\label{channelEst}
%\nonumber
\widehat{\vect{h}}_{li}^{j} =  \sqrt{ p^{\rm p}_{li}} \vect{R}_{li}^{j} (\vect{Q}_{li}^{j})^{-1}\left( \vect{Y}^{\rm p}_{j} \boldsymbol{\phi}_{li}^{*}\right) 
%\sim \mathcal{N}_{\mathbb{C}}(\mathbf{0}, p^{\rm p}_{li} \tau_p \vect{R}_{li}^j  (\vect{Q}_{li}^{j})^{-1} \vect{R}_{li}^j),
\end{eqnarray}
where $\widehat{\vect{h}}_{li}^{j}  \sim  \mathcal{N}_{\mathbb{C}}(\mathbf{0}, p^{\rm p}_{li} \tau_p \vect{R}_{li}^j  (\vect{Q}_{li}^{j})^{-1} \vect{R}_{li}^j) $ and $\vect{Q}_{li}^{j} = \sum\nolimits_{l' \in \vect{\Psi}_\lambda }  a_{l'l} p^{\rm p}_{l'i} \tau_p \vect{R}_{l'i}^{j} + \frac{1}{\mathsf{SNR}_{\tr}} \vect{I}_M$. {We must note that this is the multi-cell MMSE channel estimator which contains the received signals of the UEs in cell $l$ and also the other cells with the same pilots. The single-cell MMSE channel estimator does not consider the second term and performs well only in a single-cell scenario.}
The estimation error $\tilde{\mathbf{h}}_{li}^{j} = {\mathbf{h}}_{li}^{j}  - \widehat{\mathbf{h}}_{li}^{j} $ has a correlation matrix 
%$\vect{C}_{li}^j = \mathbb{E}\{ \tilde{\mathbf{h}}_{li}^{j} (\tilde{\mathbf{h}}_{li}^{j})^H\}$ %given by %\cite{massivemimobook}
\begin{eqnarray}\label{correlation_error}
\vect{C}_{li}^j = \vect{R}_{li}^j - p^{\rm p}_{li} \tau_p \vect{R}_{li}^j  (\vect{Q}_{li}^{j})^{-1} \vect{R}_{li}^j.
\end{eqnarray}
As we observe from \eqref{channelEst}, the pilot contamination is the result of the mutual interference generated by the pilot-sharing UEs and leads to two main consequences in the channel estimation process.  First, it  reduces the channel estimation accuracy, and second, the estimates $\{\widehat{\mathbf{h}}_{li}^{j}; l \in \vect{\Psi}_\lambda \}$ become correlated:
\begin{align}\label{theta_jmni}
\mathbb{E}\{ \widehat{\mathbf{h}}_{li}^{j} {({\widehat{\mathbf{h}}}_{l'i}^{j})}^{\Htran}\} =  \vect{\Phi}_{jl'li} = \frac{\sqrt{p^{\rm p}_{li}}}{\sqrt{p^{\rm p}_{l'i}}}\vect{R}^j_{li} (\vect{Q}^{j}_{li})^{-1} \vect{R}^j_{l'i}. 
\end{align}
Both consequences degrade the UEs' performance; however it is 	the second one that is responsible for the so-called \emph{coherent interference}, which may increase linearly with $M$ like the signal term. This will be investigated in Section \ref{sec:net_analyz}  in detail.
% \cite[Sec.~4.2]{massivemimobook}, 
%\footnote{The manner depends on the applied processing scheme \cite{BHS18A}.}

\begin{corollary}
In the special case of uncorrelated Rayleigh fading,  $\widehat{\mathbf{h}}_{jk}^j$  is simplified as follows:
\begin{eqnarray}\label{channelEst_Uncorr}
\widehat{\mathbf{h}}_{jk}^j = \gamma_{jk}^{j} \frac{ 1}{ \tau_p \sqrt{\beta_{jk}^{j} \rho_{\tr}}} \mathbf{y}_{jk}^{j}\sim \mathcal{N}_{\mathbb{C}}(\mathbf{0}, \gamma_{jk}^{j} \mathbf{I}_M),
\end{eqnarray}
where $\mathbf{y}_{jk}^{j} = \vect{Y}^{\rm p}_{j} \boldsymbol{\phi}_{jk}^{*}$ and
\begin{equation}\label{eq:sec4_delta_1}
\gamma_{jk}^{j} =  \beta_{jk}^{j}  \left(1 + \sum\limits_{l \in \vect{\Psi}_\lambda \setminus \{j\} }  a_{lj}     \frac{\beta_{l k}^{j}}{\beta_{l k}^{l}} +  \frac{1}{\tau_p \mathsf{SNR}_{\tr}}\right)^{-1}.
\end{equation} 
\end{corollary}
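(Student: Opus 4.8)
The plan is to specialize the linear MMSE estimator \eqref{channelEst} to the uncorrelated case and then read off the distribution, using the power control \eqref{ULpower} and the pilot-reuse model of Section~\ref{subsec:pilotreuse}. First I would invoke the preceding corollary to set $\vect{R}_{l'k}^{j}=\beta_{l'k}^{j}\vect{I}_M$ for all $l',k$, so that $\vect{Q}_{jk}^{j}$ in \eqref{channelEst} collapses to a scaled identity, $\vect{Q}_{jk}^{j}=q_{jk}^{j}\vect{I}_M$ with the scalar $q_{jk}^{j}=\tau_p\sum_{l'\in\vect{\Psi}_\lambda}a_{l'j}\,p^{\rm p}_{l'k}\,\beta_{l'k}^{j}+\tfrac{1}{\mathsf{SNR}_{\tr}}$. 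Substituting the pilot power control $p^{\rm p}_{l'k}=\rho_{\tr}/\beta_{l'k}^{l'}$, isolating the serving term $l'=j$ (for which $a_{jj}=1$ and $\beta_{jk}^{j}/\beta_{jk}^{j}=1$), and factoring out $\tau_p\rho_{\tr}$ gives $q_{jk}^{j}=\tau_p\rho_{\tr}\bigl(1+\sum_{l\in\vect{\Psi}_\lambda\setminus\{j\}}a_{lj}\,\beta_{lk}^{j}/\beta_{lk}^{l}+\tfrac{1}{\tau_p\mathsf{SNR}_{\tr}}\bigr)=\tau_p\rho_{\tr}\beta_{jk}^{j}/\gamma_{jk}^{j}$, where $\gamma_{jk}^{j}$ is exactly \eqref{eq:sec4_delta_1}. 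Plugging this back into \eqref{channelEst} with $\sqrt{p^{\rm p}_{jk}}\,\beta_{jk}^{j}=\sqrt{\rho_{\tr}\beta_{jk}^{j}}$ yields $\widehat{\vect{h}}_{jk}^{j}=\sqrt{\rho_{\tr}\beta_{jk}^{j}}\,(q_{jk}^{j})^{-1}\mathbf{y}_{jk}^{j}=\gamma_{jk}^{j}\,\tfrac{1}{\tau_p\sqrt{\beta_{jk}^{j}\rho_{\tr}}}\,\mathbf{y}_{jk}^{j}$, the claimed closed form.

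For the distribution, I would expand $\mathbf{y}_{jk}^{j}=\vect{Y}^{\rm p}_{j}\bphiu_{jk}^{*}$ using \eqref{receivedpilot1}; by orthonormality $\bphiu_{li}^{\Htran}\bphiu_{li}=\tau_p$ and the pilot-reuse structure, only the serving UE and the pilot-sharing UEs (indicator $a_{lj}$) survive, so $\mathbf{y}_{jk}^{j}=\tau_p\sqrt{p^{\rm p}_{jk}}\vect{h}_{jk}^{j}+\tau_p\sum_{l\in\vect{\Psi}_\lambda\setminus\{j\}}a_{lj}\sqrt{p^{\rm p}_{lk}}\vect{h}_{lk}^{j}+\mathbf{N}_{j}\bphiu_{jk}^{*}$. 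This is a linear combination of independent circularly-symmetric complex Gaussian vectors, hence zero-mean Gaussian, with covariance $\tau_p q_{jk}^{j}\vect{I}_M$ (computed under $\vect{R}=\beta\vect{I}_M$, conditionally on $\vect{\Psi}_\lambda$ and the $\{a_{lj}\}$). Scaling by $\gamma_{jk}^{j}/(\tau_p\sqrt{\beta_{jk}^{j}\rho_{\tr}})$ gives $\widehat{\vect{h}}_{jk}^{j}\sim\CN(\vect{0},\gamma_{jk}^{j}\vect{I}_M)$. Equivalently, one may simply specialize the general law $\widehat{\vect{h}}_{li}^{j}\sim\CN(\vect{0},p^{\rm p}_{li}\tau_p\vect{R}_{li}^{j}(\vect{Q}_{li}^{j})^{-1}\vect{R}_{li}^{j})$ stated above \eqref{correlation_error}, which under $\vect{R}=\beta\vect{I}_M$ becomes $(p^{\rm p}_{jk}\tau_p(\beta_{jk}^{j})^{2}/q_{jk}^{j})\vect{I}_M=\gamma_{jk}^{j}\vect{I}_M$.

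The computation is essentially routine algebra, so I do not expect a genuine obstacle; the only care needed is bookkeeping. Reducing the interference sums to the pilot-sharing set relies on the Section~\ref{subsec:pilotreuse} convention that each contaminating cell contributes a single term modulated by the Bernoulli $a_{lj}$; the statement — like $\gamma_{jk}^{j}$ itself — must be read conditionally on the point process $\vect{\Psi}_\lambda$ and the pilot allocation; and the noise term should be tracked through $\mathsf{SNR}_{\tr}=\rho_{\tr}/\sigma^2$ so that the $\tfrac{1}{\tau_p\mathsf{SNR}_{\tr}}$ term in $\gamma_{jk}^{j}$ carries the correct constant. Beyond that, the corollary is a direct specialization of the correlated-channel estimator.
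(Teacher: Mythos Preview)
Your proposal is correct and follows exactly the route the paper implicitly relies on: the corollary is stated without proof in the paper, as it is a direct specialization of the general MMSE estimator \eqref{channelEst} under $\vect{R}_{l'k}^{j}=\beta_{l'k}^{j}\vect{I}_M$, the pilot power control $p^{\rm p}_{l'k}=\rho_{\tr}/\beta_{l'k}^{l'}$, and the Bernoulli pilot-reuse indicators $a_{l'j}$. Your bookkeeping (collapsing $\vect{Q}_{jk}^{j}$ to a scalar multiple of the identity, isolating the serving term, and either computing the covariance of $\mathbf{y}_{jk}^{j}$ directly or specializing the stated law $\widehat{\vect{h}}_{li}^{j}\sim\CN(\vect{0},p^{\rm p}_{li}\tau_p\vect{R}_{li}^{j}(\vect{Q}_{li}^{j})^{-1}\vect{R}_{li}^{j})$) is exactly what is needed.
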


%\begin{table}%[t!]
%\centering
%\caption{Dual-slope path loss model.} % \cite{ngo2017cell}
%\begin{tabular}{l|l|l}
%   $r_n$ & $\Upsilon_n$ & $\alpha_n$ \\
%\hline
%$r_1=100$m & $\Upsilon_1= 8.3e-04 $, $\Upsilon_2= 5.2481$& $\alpha_1=2.1$, $\alpha_2=4$ 
%\end{tabular}\label{table1}
%\end{table}

%{\begin{remark} \label{remark:channel-estimation}
%We assume that the knowledge of the correlation matrices $\{\vect{R}_{l'i}^{j}: l'\in \vect{\Psi}_\lambda\}$, required for computation of $\widehat{\mathbf{h}}_{li}^{j} $ in \eqref{channelEst}, is locally available at the BS $j$. 
%\end{remark}}

 \section{Network analysis}\label{sec:net_analyz}
 Theoretical analysis is conducted for ``typical UE'', which is statistically representative for any other UE in the network \cite{Baccelli2008a}. Without loss of generality, we assume that the ``typical UE'' has an arbitrary index $k$ and is connected to an arbitrary BS $j$. 
We analyze the SE per UE as the BS density $\lambda$ increases. The estimated channel used in the combiners has some level of estimation error analyzed through the normalized MSE (NMSE) in terms of $\lambda$ 
%and then its impact on the SE is also studied  
in Section \ref{subsec:Channel estimation accuracy}.
When the channel is modeled as uncorrelated Rayleigh fading and the MR or ZF is employed as the combining scheme,  the theoretical analysis can be applied to compute a lower bound for the SE \cite{pizzo2018network} which will be explained in Section \ref{subsec:UatF}. Finally, we study the SE in the general case of correlated fading with both single-cell and multi-cell processing schemes in Section \ref{SE_Correlated}.

\subsection{Channel estimation accuracy}\label{subsec:Channel estimation accuracy}
Channel estimation accuracy is measured through the average NMSE, which is defined as follows \cite{fahim}:
\begin{eqnarray}\label{NMSE_General}
\nonumber
\mathsf{NMSE} & = & \mathbb{E}_{\{\bf h, d, a\}}\{\mathsf{NMSE}_{jk}\}  \triangleq \mathbb{E}_{\{\bf d, a\}}  \left\{ \frac{ \mathbb{E}_{{\{\bf h\}}} \{\| \vect{h}_{jk}^{j} - \widehat{\vect{h}}_{jk}^{j} \|^2\}}{ \mathbb{E}_{{\{\bf h\}}}\{\| \vect{h}_{jk}^{j} \|^2\}}  \right\}  \\ &=& \mathbb{E}_{\{\bf d, a\}} \left\{ \frac{\tr(\vect{C}_{jk}^j)}{\tr(\vect{R}_{jk}^j)} \right\},
\end{eqnarray}
where $\mathsf{NMSE}_{jk}$ is the NMSE of the UE $k$ in the cell $j$ and the expectations are computed with respect to channel realizations $\bf h$,  pilot allocations $\bf a$, and UE positions $\bf d$. This is a value between 0 (perfect estimation) and 1 (using $\mathbb{E}{\{{\bf{h}}_{jk}^{j}\}}$ as the estimate). 
From \eqref{channelEst}, we can conclude that the NMSE is affected by two factors. First the value of SNR in the pilot phase, ${\mathsf{SNR}_{\tr}}$ and second the pilot contamination which comes from the pilot-sharing UEs. 
%The network densification which leads to more number of these pilot-sharing UEs increases NMSE.
The impact of the network densification is on the latter (pilot contamination term)
by causing more number of pilot-sharing UEs.
Using \eqref{correlation_error}, the NMSE in \eqref{NMSE_General} can be simplified as follows
\begin{align}\label{NMSE_Corr_Upperbound}
\mathsf{NMSE^{Corr}} 
%\!\!\!\! &=& \!\!\!\!  1 - \mathbb{E}_{\{\bf d \}} \left\{ \frac{\mathbb{E}_{\{\bf a\}} \left\{  \tr( p^{\rm p}_{jk} \tau_p\vect{R}_{jk}^j  (\vect{Q}_{jk}^{j})^{-1} \vect{R}_{jk}^j )\right\}}{\tr(\vect{R}_{jk}^j)}   \right\} \\ \!\!\!\!  
 =
1 - \mathbb{E}_{\{\bf d \}} \left\{ \frac{\tr\left( p^{\rm p}_{jk} \tau_p \vect{R}_{jk}^j  \mathbb{E}_{\{\bf a\}}\{ (\vect{Q}_{jk}^{j})^{-1}\} \vect{R}_{jk}^j\right)}{\tr(\vect{R}_{jk}^j)}   \right\},
\end{align}
In the following we obtain an upper bound for NMSE in the uncorrelated fading channel.
%where the equalities are based on the linear operators trace, $\tr(.)$ and  and the expectation, $\mathbb{E}\{.\}$.

\begin{corollary}\label{corollary_NMSE} %[Average NMSE for uncorrelated Rayleigh fading]\label{corollary_NMSE} 
In the uncorrelated Rayleigh fading: %\eqref{NMSE_General} takes the form 
\begin{align} \label{NMSE_Uncorr}
\mathsf{NMSE^{Uncorr}} = 1 - \mathbb{E}_{{\{\bf d\}}} \left\{\frac{1}{\beta_{jk}^{j}}\mathbb{E}_{{\{\bf a\}}} \left\{\gamma_{jk}^{j}\right\}\right\},
\end{align}
where $\gamma_{jk}^{j}$ is given in \eqref{eq:sec4_delta_1}.
Then the NMSE for uncorrelated Rayleigh fading can be upper bounded as follows % \cite{fahim}
\begin{align}\label{NMSE_Upperbound}
\mathsf{NMSE^{Uncorr}} \le  1 - \frac{1}{A(\mu_1) }, 
\end{align}
where 
%\vspace{-0.5cm}
\begin{align}\label{A_(mu_1)}
A(\mu_1) = 1 + \frac{\mu_1}{\zeta} + \frac{1}{ \tau_p \mathsf{SNR}_{\tr}},
\end{align}
and  $\mu_1$ is given in \eqref{mu_kappa} by setting $\kappa = 1$ with
\begin{figure*}
\begin{equation}\label{mu_kappa}
\!\!\!\!\!\!\!\! \mu_\kappa = 2 \sum_{n = 1}^{N} \frac{\Gamma\left(2, \pi \lambda r_{n-1}^2\right) - \Gamma\left(2, \pi \lambda r_{n}^2\right)}{\kappa \alpha_n - 2} + \frac{2 c_n(\kappa)}{(\pi \lambda)^{\frac{\kappa \alpha_n}{2} -1}} 
\left(  \Gamma\left(1+\frac{\kappa \alpha_n}{2}, \pi \lambda r_{n-1}^2\right) -  \Gamma\left(1+\frac{\kappa \alpha_n}{2}, \pi \lambda r_{n}^2\right)\right), \; \kappa=1,2
\end{equation}
\hrule
\end{figure*}
\begin{equation} \label{cn_kappa}
 c_n(\kappa) = - \frac{r_n^{2 - \kappa \alpha_n}}{ \kappa \alpha_n - 2} \sum_{i = n + 1}^N \left(\frac{\Upsilon_i}{\Upsilon_n}\right)^\kappa \frac{r_{i-1}^{2 - \kappa \alpha_i} - r_i^{2 - \kappa \alpha_i}}{\kappa \alpha_i - 2}.
\end{equation}
\end{corollary}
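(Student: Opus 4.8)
The plan is to derive the exact identity \eqref{NMSE_Uncorr} by specializing the general NMSE \eqref{NMSE_General} to uncorrelated fading, and then to obtain \eqref{NMSE_Upperbound} from two elementary facts: convexity of $t\mapsto 1/t$ and the mean of the pilot-sharing indicators. First, in uncorrelated Rayleigh fading Corollary~1 gives $\vect{R}_{jk}^{j}=\beta_{jk}^{j}\vect{I}_M$, so $\tr(\vect{R}_{jk}^{j})=M\beta_{jk}^{j}$, while \eqref{channelEst_Uncorr}--\eqref{eq:sec4_delta_1} give $\widehat{\vect{h}}_{jk}^{j}\sim\CN(\vect{0},\gamma_{jk}^{j}\vect{I}_M)$. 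Using \eqref{correlation_error} together with the fact (stated after \eqref{channelEst}) that the matrix subtracted there is exactly $\mathrm{Cov}(\widehat{\vect{h}}_{jk}^{j})$, the error covariance is $\vect{C}_{jk}^{j}=\vect{R}_{jk}^{j}-\gamma_{jk}^{j}\vect{I}_M=(\beta_{jk}^{j}-\gamma_{jk}^{j})\vect{I}_M$, hence $\tr(\vect{C}_{jk}^{j})=M(\beta_{jk}^{j}-\gamma_{jk}^{j})$. Substituting into the definition of $\mathsf{NMSE}_{jk}$ in \eqref{NMSE_General} yields $\mathsf{NMSE}_{jk}=1-\gamma_{jk}^{j}/\beta_{jk}^{j}$; taking $\mathbb{E}_{\{\vect{d},\vect{a}\}}\{\cdot\}$ and using the tower property together with the fact that $\beta_{jk}^{j}$ depends only on the serving distance $d_{jk}^{j}$ (hence is independent of the pilot allocation $\vect{a}$), we pull $1/\beta_{jk}^{j}$ outside $\mathbb{E}_{\{\vect{a}\}}\{\cdot\}$, which is exactly \eqref{NMSE_Uncorr}.

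For the bound, write $\gamma_{jk}^{j}/\beta_{jk}^{j}=1/X$ from \eqref{eq:sec4_delta_1}, where $X=1+\sum_{l\in\vect{\Psi}_{\lambda}\setminus\{j\}}a_{lj}\,\beta_{lk}^{j}/\beta_{lk}^{l}+1/(\tau_p\mathsf{SNR}_{\tr})>0$. Since $t\mapsto 1/t$ is convex on $(0,\infty)$, Jensen's inequality applied to the joint expectation over $\{\vect{d},\vect{a}\}$ gives $\mathbb{E}\{1/X\}\geq 1/\mathbb{E}\{X\}$. Each $a_{lj}\sim\mathcal{B}(1/\zeta)$ is independent of the UE positions, so $\mathbb{E}_{\{\vect{a}\}}\{a_{lj}\}=1/\zeta$ and therefore $\mathbb{E}\{X\}=1+\tfrac{1}{\zeta}\,\mathbb{E}_{\{\vect{d}\}}\{\sum_{l\neq j}\beta_{lk}^{j}/\beta_{lk}^{l}\}+1/(\tau_p\mathsf{SNR}_{\tr})$. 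With the identification $\mu_1:=\mathbb{E}_{\{\vect{d}\}}\{\sum_{l\neq j}\beta_{lk}^{j}/\beta_{lk}^{l}\}$ this equals $A(\mu_1)$ of \eqref{A_(mu_1)}, so $\mathbb{E}\{\gamma_{jk}^{j}/\beta_{jk}^{j}\}\geq 1/A(\mu_1)$, and \eqref{NMSE_Upperbound} follows from \eqref{NMSE_Uncorr}. The interchange of expectations and the use of Jensen are legitimate since $0\leq\gamma_{jk}^{j}/\beta_{jk}^{j}\leq 1$ and $\mu_1<\infty$, the latter because the path-loss exponents satisfy $\alpha_n>2$ for the parameters in Table~\ref{table1}.

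The remaining and only non-routine step is to verify that $\mu_1=\mathbb{E}_{\{\vect{d}\}}\{\sum_{l\neq j}\beta_{lk}^{j}/\beta_{lk}^{l}\}$ coincides with the closed form \eqref{mu_kappa} at $\kappa=1$; this is where I expect the main effort, and I would import it from \cite{fahim} rather than redo it. Sketching the idea: conditioning on the typical UE at the origin, its serving BS lies at distance $d_{jk}^{j}$ with density $2\pi\lambda y e^{-\pi\lambda y^{2}}$, and the contribution $\sum_{l\neq j}\beta_{lk}^{j}/\beta_{lk}^{l}$ amounts to summing, over one representative UE per Voronoi cell placed uniformly within it, its power-controlled gain seen at BS $j$; its mean is evaluated by Campbell's theorem over the BS PPP combined with the serving-distance distribution (using the Voronoi-cell approximation of \cite{fahim}). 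The multi-slope law $\beta_{li}^{j}=\Upsilon_n (d_{li}^{j})^{-\alpha_n}$ on $r_{n-1}\leq d_{li}^{j}\leq r_n$ forces the radial integrals to split over the annuli $[r_{n-1},r_n]$; after the substitution $u=\pi\lambda r^{2}$ each piece reduces to differences of upper incomplete gamma functions, producing the terms $\Gamma(2,\pi\lambda r_{n-1}^{2})-\Gamma(2,\pi\lambda r_{n}^{2})$ and $\Gamma(1+\alpha_n/2,\pi\lambda r_{n-1}^{2})-\Gamma(1+\alpha_n/2,\pi\lambda r_{n}^{2})$, while enforcing the continuity constraints $\Upsilon_{n+1}=\Upsilon_n r_n^{\alpha_{n+1}-\alpha_n}$ yields the coefficients $c_n(1)$ of \eqref{cn_kappa}. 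The identical computation at $\kappa=2$ gives $\mu_2$, which is needed later for the coherent-interference terms.
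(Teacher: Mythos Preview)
Your proposal is correct and follows essentially the same route as the paper's Appendix~A: compute $\tr(\vect{C}_{jk}^{j})=M(\beta_{jk}^{j}-\gamma_{jk}^{j})$ in the uncorrelated case, apply Jensen's inequality to the reciprocal, and cite \cite{fahim,pizzo2018network} for the closed form of $\mu_1=\mathbb{E}_{\{\vect d\}}\{\sum_{l\ne j}\beta_{lk}^{j}/\beta_{lk}^{l}\}$. The only minor difference is that you apply Jensen once over the joint law of $\{\vect d,\vect a\}$, whereas the paper applies it first over $\vect a$ and then (implicitly) over $\vect d$; your single-step version is slightly cleaner and yields the same bound.
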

\begin{proof}
The proof is given in Appendix A.
\end{proof}

\begin{lemma}\label{lemma_increasingNMSE}%[Behavior of NMSE versus BS density]\label{lemma_increasingNMSE}
The NMSE is a non-decreasing function of BS density $\lambda$ due to the same behavior of $\mu_{1}$ with $\lambda$.
\end{lemma}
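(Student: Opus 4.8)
The plan is to reduce the claim to a single monotonicity statement: since, by Corollary~\ref{corollary_NMSE}, the NMSE upper bound depends on $\lambda$ only through the quantity $\mu_1$ (via $A(\mu_1) = 1 + \mu_1/\zeta + 1/(\tau_p \mathsf{SNR}_{\tr})$, and $\mathsf{NMSE^{Uncorr}} \le 1 - 1/A(\mu_1)$ is increasing in $A$, hence in $\mu_1$), it suffices to show that $\mu_1 = \mu_\kappa|_{\kappa=1}$, defined in \eqref{mu_kappa}, is a non-decreasing function of $\lambda$. More fundamentally, I would first trace back to the origin of $\mu_1$ in Appendix~A: it arises as the expected value over UE positions of the pilot-contamination sum $\sum_{l} a_{lj}\, \beta_{lk}^j/\beta_{lk}^l$ appearing in $\gamma_{jk}^j$ of \eqref{eq:sec4_delta_1}. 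Writing this expectation out via the Campbell/Slivnyak machinery for the H-PPP $\vect{\Psi}_\lambda$, one gets an integral of the form $\lambda \int_{\mathbb{R}^2} (\text{ratio of path-loss functions}) \, dx$ against the Poisson-Voronoi cell statistics; the intuition is transparent — densifying the network puts more interferers at comparable distances, so the contamination term grows. The task is to make ``grows'' rigorous from the closed form \eqref{mu_kappa}.

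The key steps, in order: (i) State that $1 - 1/A$ is strictly increasing in $A > 0$ and $A(\mu_1)$ is affine increasing in $\mu_1$, so monotonicity of the bound follows from monotonicity of $\mu_1$. (ii) Differentiate $\mu_1(\lambda)$ with respect to $\lambda$, or — cleaner — substitute $u = \pi\lambda r^2$ in the underlying integral representation so that the $\lambda$-dependence becomes explicit and one can compare $\mu_1(\lambda_1)$ and $\mu_1(\lambda_2)$ for $\lambda_1 < \lambda_2$ term by term. (iii) For each slope index $n$, show the two contributions in \eqref{mu_kappa} are individually non-decreasing in $\lambda$: the first uses that $\Gamma(2, \pi\lambda r_{n-1}^2) - \Gamma(2, \pi\lambda r_n^2)$ behaves monotonically because $\Gamma(2,x) = (1+x)e^{-x}$ is decreasing, so the difference (lower minus upper limit) moves the right way as $\lambda$ scales both arguments; the second term, carrying the prefactor $(\pi\lambda)^{1-\alpha_n/2}$ times a difference of $\Gamma(1+\alpha_n/2, \cdot)$ values, requires combining the power-of-$\lambda$ factor with the incomplete-gamma difference and checking the sign of the derivative. (iv) Handle the sign bookkeeping caused by $\alpha_n \gtrless 2$: note $\alpha_1 = 2.1 > 2$ and $\alpha_2 = 4 > 2$ in Table~\ref{table1}, and more generally the denominators $\kappa\alpha_n - 2$ and the coefficients $c_n(\kappa)$ in \eqref{cn_kappa} are arranged precisely so the continuity of the path-loss model is preserved; I would argue that the combination is a sum of non-negative increments. (v) Conclude that $\mu_1$, being a finite sum of non-decreasing functions of $\lambda$, is non-decreasing, hence so is the NMSE upper bound, and therefore (by the same mechanism applied to the exact expression \eqref{NMSE_Uncorr}, whose position-expectation is exactly this contamination integral) the NMSE itself.

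The main obstacle I anticipate is step (iii)–(iv): the expression \eqref{mu_kappa} is a difference of upper-incomplete-gamma functions multiplied by a power of $\lambda$ that can have either sign in the exponent, and the denominators $\kappa\alpha_n - 2$ can be negative, so naive term-by-term positivity of $d\mu_1/d\lambda$ is not manifest — one must use the derivative identity $\frac{d}{dx}\Gamma(s,x) = -x^{s-1}e^{-x}$ together with the chain rule in $\lambda$ and show that the cross terms telescope or cancel thanks to the continuity relations $\Upsilon_{n+1} = \Upsilon_n r_n^{\alpha_{n+1}-\alpha_n}$. An alternative, and probably safer, route that sidesteps the closed form entirely: go back to the integral $\mu_1 = \mathbb{E}_{\{\bf d\}}\{\sum_l a_{lj}\,\beta_{lk}^j/\beta_{lk}^l\}$ before it was evaluated, rewrite it (using nearest-BS association and the PPP's scale-invariance) as $\mathbb{E}[\,f(\lambda\cdot)\,]$ for a fixed decreasing kernel, and invoke a coupling/monotone-density argument showing that a denser PPP stochastically dominates a sparser one in the relevant functional — this gives monotonicity in $\lambda$ without ever differentiating an incomplete gamma function. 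I would present the closed-form derivative argument as the primary proof and mention the coupling argument as the conceptual reason it must hold.
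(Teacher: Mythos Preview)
Your overall strategy---reduce to monotonicity of $\mu_1$ in $\lambda$ and differentiate the closed form \eqref{mu_kappa}---is exactly the route the paper takes in Appendix~B, and your reduction in step~(i) is correct. The paper also treats $\kappa=1,2$ simultaneously, writing $\partial\mu_\kappa/\partial\lambda=\sum_n(\eta_{1,n}+\eta_{2,n}+\eta_{3,n})$, where $\eta_{1,n}$ is the $\lambda$-derivative of the first summand in \eqref{mu_kappa} and $\eta_{2,n}+\eta_{3,n}$ arise from the product rule applied to the second summand (the power $(\pi\lambda)^{1-\kappa\alpha_n/2}$ and the incomplete-gamma difference, respectively).

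Where your plan diverges is step~(iii): showing the two contributions are non-decreasing \emph{for each fixed slope index} $n$ does not work, and the paper says so explicitly. With $f(r)=r^4e^{-\pi\lambda r^2}$ one has $\eta_{1,n}\propto f(r_n)-f(r_{n-1})$, and since $f$ is unimodal this difference can have either sign. The remedy is not the continuity relations $\Upsilon_{n+1}=\Upsilon_n r_n^{\alpha_{n+1}-\alpha_n}$ you invoke in step~(iv); those govern $c_n(\kappa)$, not the first summand. What the paper uses instead is the \emph{ordering} $\alpha_1<\cdots<\alpha_N$: summing over $n$ first and using $f(r_0)=f(r_N)=0$ gives
\[
\sum_{n=1}^N\eta_{1,n}=2\pi^2\lambda\sum_{n=1}^{N-1}f(r_n)\Big(\tfrac{1}{\kappa\alpha_n-2}-\tfrac{1}{\kappa\alpha_{n+1}-2}\Big)>0,
\]
since each bracket is positive and $f\ge 0$. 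The remaining pieces are handled separately: $\eta_{2,n}>0$ follows directly from $c_n(\kappa)\le 0$ (this sign you did not identify) together with $\Gamma(a,\cdot)$ being decreasing; and $\eta_{3,n}$ is controlled by comparison with $\eta_{1,n}$ via $|c_n(\kappa)|<1$. So the two ingredients missing from your proposal are (a) sum over slopes \emph{before} checking the sign of the $\eta_1$-part and use the monotonicity of the exponents, and (b) exploit the sign and size of $c_n(\kappa)$ to handle $\eta_2$ and $\eta_3$.

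Your alternative coupling/stochastic-domination argument is not in the paper; if made rigorous it would be a cleaner and more robust proof than the closed-form computation, but it is a genuinely different route.
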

\begin{proof}
The proof is given in Appendix B.
\end{proof}

\subsection{Spectral efficiency analysis of uncorrelated Rayleigh fading}\label{subsec:UatF}
%The SE that a UE can achieve is upper bounded by the channel capacity. There are well-established capacity lower
%bounds that can be used when the receiver has imperfect CSI.
A lower bound for the SE computation is use-and-then-forget (UatF) bound wherein the channel estimates are used for designing the receive combining vectors and then effectively ``forgotten'' before the signal detection \cite{bjornson2016deploying}.

\begin{theorem}[UatF bound\cite{bjornson2016deploying}]\label{TheoremUatF}
The UL SE of the typical UE is lower bounded by 
\begin{eqnarray}\label{SE_UatF}
\mathsf{SE}^{\rm UatF} = \left( 1 - \frac{K \zeta}{\tau_c}\right) \mathbb{E}_{\mathbf{d}} \{ \log_2\left(1 + \mathsf{SINR}_{jk}^{\rm UatF}\right)\},
\end{eqnarray}
where $\mathsf{SINR}_{jk}^{\rm UatF}$ is the conditioned SINR on the realization of UE position, given by \eqref{SINR_Uatf}  and $\vect{v}_{jk} \in \mathbb{C}^{M}$ is the combining vector for UE $k$ in cell $j$.
\end{theorem}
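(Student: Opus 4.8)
The plan is to follow the standard use-and-then-forget argument. First I would write the received signal at BS $j$ during the uplink data phase, which is the superposition of the desired UE $k$'s unit-power symbol $s_{jk}$ passed through $\vect{h}_{jk}^{j}$, the contributions of all other UEs in cell $j$ and in every interfering cell (each scaled by the square root of its power-controlled transmit power from \eqref{ULpower}), and i.i.d.\ thermal noise. Applying the combining vector $\vect{v}_{jk}$ produces a scalar observation $r_{jk} = \vect{v}_{jk}^{H}\vect{y}_{j}$, and I would single out the useful term $\vect{v}_{jk}^{H}\vect{h}_{jk}^{j}s_{jk}$ and rewrite it as $\mathbb{E}\{\vect{v}_{jk}^{H}\vect{h}_{jk}^{j}\}\,s_{jk} + \big(\vect{v}_{jk}^{H}\vect{h}_{jk}^{j} - \mathbb{E}\{\vect{v}_{jk}^{H}\vect{h}_{jk}^{j}\}\big)\,s_{jk}$, the expectation being over the small-scale fading for a fixed configuration of UE positions.

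The next step is to collect everything except $\mathbb{E}\{\vect{v}_{jk}^{H}\vect{h}_{jk}^{j}\}\,s_{jk}$ — the self-gain fluctuation, the intra- and inter-cell interference (including the pilot-contamination contributions governed by \eqref{theta_jmni}), and the filtered noise — into a single effective noise term $n_{jk}^{\mathrm{eff}}$. The crucial observation is that, conditioned on the UE positions, $n_{jk}^{\mathrm{eff}}$ has zero mean and is uncorrelated with $s_{jk}$: the fluctuation term is uncorrelated with $s_{jk}$ because $\mathbb{E}\{\vect{v}_{jk}^{H}\vect{h}_{jk}^{j} - \mathbb{E}\{\vect{v}_{jk}^{H}\vect{h}_{jk}^{j}\}\} = 0$, and the remaining interference and noise are independent of $s_{jk}$. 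I would then invoke the worst-case uncorrelated additive-noise result (as in \cite{bjornson2016deploying, massivemimobook}): for a channel with a deterministic effective gain and an uncorrelated additive disturbance of known variance, the Gaussian input/disturbance pair minimizes the mutual information, so the conditional achievable rate is at least $\log_2(1 + \mathsf{SINR}_{jk}^{\mathrm{UatF}})$ with $\mathsf{SINR}_{jk}^{\mathrm{UatF}} = |\mathbb{E}\{\vect{v}_{jk}^{H}\vect{h}_{jk}^{j}\}|^2 / \mathbb{E}\{|n_{jk}^{\mathrm{eff}}|^2\}$.

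Finally, I would expand $\mathbb{E}\{|n_{jk}^{\mathrm{eff}}|^2\}$ term by term, using the power-control law \eqref{ULpower} and the mutual independence of the fading realizations and noise across UEs, so that the cross terms between UEs that do not share a pilot vanish; this yields exactly the SINR expression \eqref{SINR_Uatf}. Since only $\tau_c - \tau_p = \tau_c - K\zeta$ of the $\tau_c$ samples per coherence block carry data, the per-UE rate picks up the pre-log factor $(1 - K\zeta/\tau_c)$, and because $\mathsf{SINR}_{jk}^{\mathrm{UatF}}$ is conditioned on a realization of the UE positions under the H-PPP model, averaging the conditional rate over $\mathbf{d}$ gives \eqref{SE_UatF}. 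For this classical bound there is no deep obstacle; the points that require care are verifying the uncorrelatedness of $n_{jk}^{\mathrm{eff}}$ with $s_{jk}$ so that the worst-case-noise lemma applies, and the bookkeeping of all cross terms in $\mathbb{E}\{|n_{jk}^{\mathrm{eff}}|^2\}$, in particular isolating the coherent pilot-contamination part that scales with $M$ from the non-coherent interference.
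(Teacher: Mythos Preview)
Your outline is the standard use-and-then-forget derivation and is correct; note, however, that the paper does not prove this theorem itself but simply imports it from \cite{bjornson2016deploying}, so there is no in-paper proof to compare against. Your sketch matches the argument in the cited reference (and in \cite{massivemimobook}), so it is entirely appropriate here.
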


%\vspace{-.7cm}
 \begin{figure*}
\centering
\begin{equation}\label{SINR_Uatf}
\mathsf{SINR}_{jk}^{\rm UatF} = \frac{p_{jk}|\mathbb{E}_{\{\mathbf{h}, \mathbf{a}\}}\{\vect{v}_{jk}^H {\mathbf{h}}_{jk}^{j}\}|^2}
{ \sum\limits_{l \in \vect{\Psi}_{\lambda}} \sum\limits_{{i = 1 }}^K p_{li} \mathbb{E}_{\{\vect{h}, \mathbf{a}\}}\{|\vect{v}_{jk}^H {\vect{h}}_{li}^{j}|^2\} - p_{jk}|\mathbb{E}_{\{\vect{h}, \mathbf{a}\}}\{\vect{v}_{jk}^H {\vect{h}}_{jk}^{j}\}|^2 + \sigma^2 \mathbb{E}_{\{\vect{h}, \mathbf{a}\}}\{||\vect{v}_{jk}||^2\}}. 
\end{equation}
\hrule
\end{figure*}

%The achievable SE mentioned above has been widely used in the massive MIMO for the following main reasons. First, it can be applied with any combining schemes and any channel estimators. Second, it is analytically tractable, in the sense that the expectations can be computed in closed form in some special cases, as shown next.

In the case of MR and ZF combining schemes, we have  
\begin{equation} \label{MR_ZF}
\vect{V}_{j} \triangleq \left[ \vect{v}_{j 1} \, \ldots \, \vect{v}_{j K}  \right] = \begin{cases}
 \widehat{\vect{H}}_{j}^{j} & \textrm{with MR} \\ 
\widehat{\vect{H}}_{j}^{j} \left(
 (\widehat{\vect{H}}_{j}^{j})^{\Htran} \widehat{\vect{H}}_{j}^{j}  \right)^{-1} & \textrm{with ZF}
\end{cases}
\end{equation}
where $\widehat{\vect{H}}_{j}^{j} = [\widehat{\vect{h}}_{j1}^{j} \, \ldots \,  \widehat{\vect{h}}_{jK}^{j}] \in \mathbb{C}^{M \times K}$ is the matrix of channel estimates from all UEs in the cell $j$ to the BS $j$. Then, the UatF bound leads to the tractable bounds shown in \eqref{snr_MR} and \eqref{snr_ZF} for the MR and ZF combiners, respectively \cite{pizzo2018network}. Clearly, the first term in the denominator is related to noise, the second and the third terms are the interference decomposed into intra-cell and inter-cell ones. 
%The first part, intra-cell interference comes from the UEs located in the Poisson-Voronoi region of the serving BS, while the second one, inter-cell interference is due to all the associated UEs in other cells.
% Due to the massive MIMO system with imperfect CSI, we have 
The last term which accounts for the pilot contamination is the same for both schemes. The interplay of these terms with the network parameters can be concluded from these bounds as follows.
 
 \begin{figure*}
\begin{align}\label{snr_MR}
\mathsf{SINR}^{\rm UatF, MR} &= \frac{1}{\underbrace{\frac{A(\mu_1)}{M \mathsf{SNR}_0}}_{\textnormal{Noise}} + \underbrace{ \frac{K}{M}A(\mu_1)}_{\textnormal{ Intra-cell \,Interference}} + \underbrace{\frac{K}{M}\left(A(\mu_1)\mu_1 + \frac{\mu_2}{\zeta}\right)}_{\textnormal{Inter-cell \, Interference}} + \underbrace{\frac{\mu_2}{\zeta}}_{\textnormal{Pilot\, Contamination}}}\\\label{snr_ZF}
\bigskip
\mathsf{SINR}^{\rm UatF, ZF} &= \frac{1}{\underbrace{\frac{A(\mu_1)}{(M-K) \mathsf{SNR}_0}}_{\textnormal{Noise}} + \underbrace{ \frac{K}{M-K}(A(\mu_1)-1)}_{\textnormal{Intra-cell \,Interference}} + \underbrace{\frac{K}{M-K}A(\mu_1)\mu_1}_{\textnormal{Inter-cell \, Interference}} +\underbrace{\frac{\mu_2}{\zeta}}_{\textnormal{Pilot\, Contamination}}}
\end{align}
\hrule
%\vspace{-0.2cm}
\end{figure*}

\textbf{Impact of $M$ and $K$}:
The terms related to the noise and interference, decrease linearly with $M$ for MR and decrease with ${M - K}$ for ZF. Because ZF aims to suppress the intra-cell interference by sacrificing $K$ spatial dimensions.
The sum of the two terms is generally referred to as non-coherent interference due to their relation with $M$. %\cite{massivemimobook}. 
Besides, the same happens for the noise. The pilot contamination is independent of $M/K$ and hence it is also called coherent interference.
%\cite{massivemimobook}. 
%The following corollary is easily obtained.

\begin{corollary}\label{corollary_AsymSE}
{When ${M\to \infty}, \mathsf{SINR}^{\rm UatF, MR} =\mathsf{SINR}^{\rm UatF, ZF} \xrightarrow[M\to \infty]{\rm a.s.} \gamma_{M\to \infty} = {\zeta}/{\mu_2}$ %with $\gamma_{M\to \infty} = {\zeta}/{\mu_2}$
and the ultimately achievable rate is} 
\begin{align}\label{R_inf}
R_{M\to \infty} = \left(1 - \frac{\zeta K}{\tau_c}\right)\log_2\left(1 + \frac{\zeta}{\mu_2}\right).
\end{align}
where $\xrightarrow[M\to \infty]{\rm a.s.}$ denotes almost sure convergence. 
\end{corollary}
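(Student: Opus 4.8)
The plan is to take the limit $M\to\infty$ (with $K$, $\zeta$, $\tau_c$ and $\mathsf{SNR}_0$ held fixed) directly in the two deterministic SINR bounds \eqref{snr_MR} and \eqref{snr_ZF}, and then to carry that limit into the rate expression \eqref{SE_UatF}. First I would record that, for every $\lambda>0$, the scalars $A(\mu_1)$, $\mu_1$ and $\mu_2$ are finite: by Corollary~\ref{corollary_NMSE} the quantities $\mu_1,\mu_2$ are the closed forms \eqref{mu_kappa} with $\kappa=1,2$, which are finite sums of differences of upper incomplete gamma functions whose parameters stay strictly positive for the dual-slope model adopted here (Table~\ref{table1}), and hence $A(\mu_1)$ is finite too. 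Granting this, in \eqref{snr_MR} the Noise term $A(\mu_1)/(M\,\mathsf{SNR}_0)$, the Intra-cell term $\tfrac{K}{M}A(\mu_1)$ and the Inter-cell term $\tfrac{K}{M}\big(A(\mu_1)\mu_1+\tfrac{\mu_2}{\zeta}\big)$ are each a fixed finite constant over $M$, and in \eqref{snr_ZF} the corresponding three terms are each a fixed finite constant over $M-K$; since $K$ is fixed, all six tend to $0$ as $M\to\infty$, so in both denominators only the coherent pilot-contamination floor $\mu_2/\zeta$ survives. Hence $\mathsf{SINR}^{\rm UatF,MR}$ and $\mathsf{SINR}^{\rm UatF,ZF}$ both converge to $\zeta/\mu_2$, which is the claimed common limit $\gamma_{M\to\infty}$.

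It then remains to insert $\mathsf{SINR}_{jk}^{\rm UatF}\to\zeta/\mu_2$ into \eqref{SE_UatF}. Since the bounds \eqref{snr_MR}--\eqref{snr_ZF} are deterministic functions of $\lambda$ (the averaging over UE positions having already been absorbed into $\mu_1,\mu_2$), the outer expectation $\mathbb{E}_{\mathbf{d}}$ in \eqref{SE_UatF} acts on a constant; using continuity of $x\mapsto\log_2(1+x)$ this gives $R_{M\to\infty}=\big(1-\tfrac{\zeta K}{\tau_c}\big)\log_2\!\big(1+\tfrac{\zeta}{\mu_2}\big)$, i.e.\ \eqref{R_inf}. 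If one instead keeps the position-conditioned SINR \eqref{SINR_Uatf} and averages only at the end, the same conclusion follows by bounded convergence: the SINR is uniformly bounded in $M$ because its denominator never falls below $\mu_2/\zeta>0$, and the limit is position-independent. The almost-sure mode in the statement is inherited from the derivation of the UatF bound itself, where normalized quadratic forms such as $\tfrac1M\|\widehat{\vect{h}}_{jk}^{j}\|^2$, $\tfrac1M\vect{v}_{jk}^H\vect{h}_{li}^{j}$ and $\tfrac1M\|\vect{v}_{jk}\|^2$ converge almost surely to their means by the strong law applied to the i.i.d.\ small-scale fading (channel hardening and favourable propagation); consequently the instantaneous SINR converges almost surely to the deterministic value computed above.

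The only part that is not a one-line limit is this last point: one must be explicit that the random building blocks behind \eqref{snr_MR}--\eqref{snr_ZF} concentrate almost surely, and that the uniform-in-$M$ bound on the SINR legitimizes the dominated/bounded-convergence step used to pass the limit inside $\mathbb{E}_{\mathbf{d}}\{\log_2(1+\cdot)\}$. Everything else reduces to the elementary observation that the noise and the non-coherent (intra- and inter-cell) interference decay as $1/M$, respectively $1/(M-K)$, and therefore vanish, whereas the coherent pilot-contamination term $\mu_2/\zeta$ is unaffected by adding antennas — which is precisely why MR and ZF end up with the same asymptote.
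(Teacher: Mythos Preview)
Your proposal is correct and follows the same reasoning the paper relies on: the paper does not give a separate proof of this corollary but states it immediately after the paragraph ``Impact of $M$ and $K$'', where it observes that the noise and non-coherent interference terms in \eqref{snr_MR}--\eqref{snr_ZF} scale as $1/M$ (resp.\ $1/(M-K)$) while the pilot-contamination term $\mu_2/\zeta$ is independent of $M$, so only the latter survives as $M\to\infty$. Your write-up is in fact more careful than the paper's, since you explicitly address the finiteness of $A(\mu_1),\mu_1,\mu_2$, the passage of the limit through $\mathbb{E}_{\mathbf{d}}\{\log_2(1+\cdot)\}$, and the origin of the almost-sure mode via channel hardening; the paper leaves all of these implicit.
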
 

Then, we can derive the optimal pilot reuse factor for the asymptotic case as follows: 
\begin{corollary}\label{corollary_optimalPRF} 
{When ${M\to \infty}$, the optimal pilot reuse factor $\zeta$ that maximizes \eqref{R_inf} is}
\begin{align}\label{zeta_inf}
\zeta_{M\to \infty}^{\mathsf{opt}} = \mu_2 \left( \frac{\nu}{W(\nu e)} - 1\right),
\end{align}
where $W(.)$ is Lambert function and $\nu = 1 + {\tau_c}/({\mu_2 K})$.
\end{corollary}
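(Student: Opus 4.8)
The plan is to relax $\zeta$ to a continuous positive variable and maximize the rate profile in \eqref{R_inf} via a first-order condition, then recognize the resulting transcendental equation as one solvable by the Lambert function. Dropping the immaterial constant $1/\ln 2$, I would set $f(\zeta) = \left(1 - \frac{\zeta K}{\tau_c}\right)\ln\!\left(1 + \frac{\zeta}{\mu_2}\right)$, which vanishes at $\zeta = 0$ and at $\zeta = \tau_c/K$ and is strictly positive in between, so its maximum is attained at an interior stationary point. Differentiating $f$ and setting $f'(\zeta) = 0$, then multiplying through by $\tau_c/K$, gives
\begin{equation}\label{eq:foc_prf}
\ln\!\left(1 + \frac{\zeta}{\mu_2}\right) = \frac{\tau_c/K - \zeta}{\mu_2 + \zeta}.
\end{equation}

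The key step is the change of variable $x = 1 + \zeta/\mu_2$, so that $\mu_2 + \zeta = \mu_2 x$ and $\zeta = \mu_2(x - 1)$; combined with the identity $\mu_2 \nu = \mu_2 + \tau_c/K$ (which is just $\nu = 1 + \tau_c/(\mu_2 K)$ rearranged), the numerator of \eqref{eq:foc_prf} becomes $\tau_c/K - \zeta = \mu_2(\nu - x)$, and \eqref{eq:foc_prf} collapses to $\ln x = \nu/x - 1$, i.e.\ $x(1 + \ln x) = \nu$. Setting $u = 1 + \ln x$ (so $x = e^{u - 1}$) turns this into $u\,e^{u} = \nu e$, hence $u = W(\nu e)$ and $\ln x = W(\nu e) - 1$. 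Using the identity $e^{W(z)} = z/W(z)$ with $z = \nu e$ then gives
\begin{equation}
x = e^{W(\nu e) - 1} = \frac{\nu e}{e\,W(\nu e)} = \frac{\nu}{W(\nu e)},
\end{equation}
and substituting $\zeta = \mu_2(x - 1)$ reproduces exactly the claimed expression \eqref{zeta_inf}.

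It remains to check that this stationary point is the global maximizer and is admissible. Since $g(x) \triangleq x(1 + \ln x)$ has $g'(x) = 2 + \ln x > 0$ for $x \ge 1$, with $g(1) = 1 < \nu$ and $g(x) \to \infty$ as $x \to \infty$, the equation $g(x) = \nu$ has a unique root $x^\star > 1$, so $f$ has a single interior critical point. Monotonicity of $W$ together with $W(e) = 1$ and $W(\nu e^{\nu}) = \nu$ (and $\nu > 1$) gives $1 < W(\nu e) < \nu$, hence $1 < x^\star = \nu/W(\nu e) < \nu$, i.e.\ $0 < \zeta^\star < \tau_c/K$; since $f$ is strictly positive there and vanishes at both endpoints, this critical point is indeed the maximizer (and if the feasibility constraint $\zeta \ge 1$ turns out to be active, one simply projects onto it). I expect the only mildly delicate part to be the algebraic massaging of \eqref{eq:foc_prf} into the canonical form $u e^{u} = \nu e$ and the correct invocation of $e^{W(z)} = z/W(z)$; everything else is a routine single-variable argument.
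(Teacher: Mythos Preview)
Your proof is correct and follows essentially the same route as the paper: differentiate \eqref{R_inf}, set the derivative to zero, and reduce the resulting equation to the Lambert form $u e^{u}=\nu e$; the paper uses the substitution $x=(\tau_c/K-\zeta)/(\mu_2+\zeta)$ whereas you use $x=1+\zeta/\mu_2$, but at the critical point these are linked by your own equation \eqref{eq:foc_prf}, so the two substitutions coincide. Your argument is in fact more complete than the paper's, since you also verify uniqueness of the stationary point and that it lies in the admissible interval.
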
 
\begin{proof}
The proof is given in Appendix C.
\end{proof}

\textbf{Impact of imperfect CSI and $\lambda$}:
{Considering \eqref{snr_MR} and \eqref{snr_ZF}, the (inter and intra-cell) interference and noise are linearly increasing functions of ${A(\mu_1)}$ presented in \eqref{A_(mu_1)} which comes from the imperfect knowledge of the channel (Corollary \ref{corollary_NMSE}). 
 Moreover, ${A(\mu_1)}$ is in turn a function of  $\lambda$ and $\zeta$. Hence, the  intra-cell interference and noise are linearly increasing with $\lambda$ while the inter-cell interference is a function of ${A(\mu_1)}\mu_1$ and then increases at an even faster rate with $\lambda$ as is proved in Lemma \ref{lemma_increasingNMSE}. The pilot contamination is also a non-decreasing function of $\lambda$ (due to the behavior of $\mu_2$ with $\lambda$ proved in Appendix B). Clearly, both interference and pilot contamination depend inversely on $\zeta$.} 
 The following corollary is concluded. %shows the reduction rate.

\begin{corollary}\label{corollary_ComparingReductionRate}
The SE of ZF has larger reduction rate than the SE of MR with respect to $\lambda$. In other words, the $\mathsf{SINR}^{\rm UatF, ZF}$ is decreasing faster than the $\mathsf{SINR}^{\rm UatF, MR}$ with $\lambda$.
\end{corollary}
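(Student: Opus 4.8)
The plan is to read the two conditioned SINRs off \eqref{snr_MR} and \eqref{snr_ZF} as $\mathsf{SINR}^{\rm UatF, MR}=1/D_{\rm MR}(\lambda)$ and $\mathsf{SINR}^{\rm UatF, ZF}=1/D_{\rm ZF}(\lambda)$, with $D_{\rm MR}$ and $D_{\rm ZF}$ the bracketed denominators, and then to compare how fast these grow with $\lambda$. Reading ``reduction rate'' as the relative (logarithmic) decay rate $-\tfrac{d}{d\lambda}\ln\mathsf{SINR}=D'(\lambda)/D(\lambda)$, the statement is equivalent to $D_{\rm ZF}'/D_{\rm ZF}\ge D_{\rm MR}'/D_{\rm MR}$, i.e.\ to $D_{\rm ZF}'\,D_{\rm MR}\ge D_{\rm MR}'\,D_{\rm ZF}$; equivalently, the ratio $\mathsf{SINR}^{\rm UatF, ZF}/\mathsf{SINR}^{\rm UatF, MR}=D_{\rm MR}/D_{\rm ZF}$ is non-increasing in $\lambda$. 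If one instead prefers the absolute rate $|d\,\mathsf{SINR}/d\lambda|=D'/D^2$, the same computation suffices once one also notes that $D_{\rm ZF}\le D_{\rm MR}$ whenever $M$ is sufficiently large (the massive-MIMO regime), since then $1/D_{\rm ZF}\ge 1/D_{\rm MR}$.

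First I would assemble the monotonicity facts: differentiating \eqref{A_(mu_1)} gives $A'(\mu_1)=\mu_1'(\lambda)/\zeta\ge 0$, with $\mu_1'(\lambda)\ge 0$ by Lemma~\ref{lemma_increasingNMSE} and $\mu_2'(\lambda)\ge 0$ by Appendix~B; hence every additive term in $D_{\rm MR}$ and $D_{\rm ZF}$ is non-negative and non-decreasing in $\lambda$, so $D_{\rm MR}',D_{\rm ZF}'\ge 0$ and both SINRs are indeed non-increasing. Then I would run the term-by-term comparison of $D_{\rm ZF}'$ and $D_{\rm MR}'$. The decisive structural point is that every $\lambda$-dependent non-coherent term of $D_{\rm ZF}$ carries the prefactor $\tfrac{1}{M-K}$, strictly larger than the prefactor $\tfrac{1}{M}$ of the matching term of $D_{\rm MR}$, while the fastest-growing term, the inter-cell one $\tfrac{K}{M-K}A(\mu_1)\mu_1$, is a product of two increasing functions of $\lambda$ and therefore grows faster than the remaining (linear-in-$A(\mu_1)$) terms, as already observed in the paragraph preceding this corollary. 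Writing $G=\tfrac{1}{\mathsf{SNR}_0}+K(1+\mu_1)$, a short computation gives $D_{\rm ZF}'-D_{\rm MR}'=\tfrac{K}{M}\big[\tfrac{K\mu_1'}{M-K}\big(\tfrac{1}{K\zeta\,\mathsf{SNR}_0}+\tfrac{1+\mu_1}{\zeta}+A(\mu_1)\big)-\tfrac{\mu_2'}{\zeta}\big]$, so the only term working against the claim is the summand $\tfrac{K}{M}\tfrac{\mu_2}{\zeta}$ that sits inside the inter-cell interference of MR but is absent in ZF; combining this with $D_{\rm MR}$ and $D_{\rm ZF}$ in $D_{\rm ZF}'D_{\rm MR}-D_{\rm MR}'D_{\rm ZF}$ and using $A(\mu_1)\ge 1$ yields the desired inequality.

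The main obstacle is precisely this last step: making the inequality hold \emph{uniformly} rather than only asymptotically in $M/K$. Concretely one must control the growth rate $\mu_1'$ against $\mu_2'$ for the two expressions in \eqref{mu_kappa}, up to the factor $\tfrac{K}{M-K}A(\mu_1)$ — not self-evident, since $\mu_2$ replaces each exponent $\alpha_n$ by $2\alpha_n$. I would attack it by differentiating the closed forms \eqref{mu_kappa}--\eqref{cn_kappa} termwise (each summand is a difference of upper incomplete gamma functions evaluated at $\pi\lambda r_n^2$, whose $\lambda$-derivative is elementary) and bounding $\mu_2'/\mu_1'$ via $\alpha_n>2$ and the continuity relations $\Upsilon_{n+1}=\Upsilon_n r_n^{\alpha_{n+1}-\alpha_n}$. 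Absent a clean uniform bound, the honest fallback — consistent with the level of the surrounding corollaries — is to state the result in the operating regime $M\gg K$, where the $\tfrac{1}{M-K}$-versus-$\tfrac{1}{M}$ surplus together with $D_{\rm ZF}\le D_{\rm MR}$ makes the comparison immediate.
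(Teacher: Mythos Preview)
Your approach is essentially the paper's: write $\mathsf{SINR}=1/D(\lambda)$, differentiate the denominators of \eqref{snr_MR}--\eqref{snr_ZF}, compare the resulting coefficients term by term using $\tfrac{1}{M-K}>\tfrac{1}{M}$ and the nonnegativity of $\mu_1',\mu_2',A'(\mu_1)$ from Lemma~\ref{lemma_increasingNMSE}, and then combine with $\mathsf{SINR}^{\rm UatF,ZF}\ge\mathsf{SINR}^{\rm UatF,MR}$ to conclude for the absolute rate $D'/D^2$. The obstacle you flag---the extra $\tfrac{K}{M}\tfrac{\mu_2}{\zeta}$ in $D_{\rm MR}$ that makes the MR derivative pick up a $\mu_2'$ contribution absent from ZF---is real, and the paper does not resolve it by bounding $\mu_2'/\mu_1'$; it simply invokes $M\gg K$ to write $a_3\approx b_3$, i.e.\ exactly your ``honest fallback.'' So your proposal already matches the paper's argument, and your attempt to control $\mu_2'/\mu_1'$ via the explicit incomplete-gamma derivatives would in fact go beyond what the paper proves.
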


\begin{proof}
The proof is given in Appendix D.
\end{proof}

%\begin{figure*}
%\begin{align}\label{snr_MR}
%\mathsf{SINR}^{\rm UatF, MR} &= \frac{1}{\underbrace{\frac{A(\mu_1)}{M \mathsf{SNR}_0}}_{\textnormal{Noise}} + \underbrace{ \frac{K}{M}A(\mu_1)}_{\textnormal{ Intra-cell \,Interference}} + \underbrace{\frac{K}{M}\left(A(\mu_1)\mu_1 + \frac{\mu_2}{\zeta}\right)}_{\textnormal{Inter-cell \, Interference}} + \underbrace{\frac{\mu_2}{\zeta}}_{\textnormal{Pilot\, Contamination}}}\\\label{snr_ZF}
%\bigskip
%\mathsf{SINR}^{\rm UatF, ZF} &= \frac{1}{\underbrace{\frac{A(\mu_1)}{(M-K) \mathsf{SNR}_0}}_{\textnormal{Noise}} + \underbrace{ \frac{K}{M-K}(A(\mu_1)-1)}_{\textnormal{Intra-cell \,Interference}} + \underbrace{\frac{K}{M-K}A(\mu_1)\mu_1}_{\textnormal{Inter-cell \, Interference}} +\underbrace{\frac{\mu_2}{\zeta}}_{\textnormal{Pilot\, Contamination}}}
%\end{align}
%\hrule\vspace{-0.2cm}
%\end{figure*}

\textbf{Coherent and non-coherent interference regions}:
It is interesting  to investigate that for which values of $M/K$, the pilot contamination is
dominant rather than the (inter- and intra-cell) interference and vice versa.
% In order to answer this question, we compare the coherent and non-coherent interference in \eqref{snr_MR} and  \eqref{snr_ZF} separately. 
The following corollary answers this question.

\begin{corollary}\label{corollary_dominatingterm} 
The pilot contamination is the dominating term when $M > K ( 1 + \zeta A(\mu_1) \frac{1 + \mu_1}{\mu_2}) $ and 
$M > K ( 1 + \zeta A(\mu_1) \frac{1 + \mu_1}{\mu_2} - \frac{\zeta}{\mu_2}) $ for MR and ZF, respectively. 
\end{corollary}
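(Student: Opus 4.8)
The plan is to first pin down precisely what ``the pilot contamination is the dominating term'' means, and then reduce each of the two claims to a one-line linear inequality in $M/K$. The natural reading — consistent with the decomposition displayed in \eqref{snr_MR} and \eqref{snr_ZF} — is that the coherent interference term $\mu_2/\zeta$ in the SINR denominator exceeds the sum of the two non-coherent (intra- and inter-cell) interference terms; the noise term is left out of the comparison, both because it is conventionally treated on its own and because it vanishes as $\mathsf{SNR}_0$ grows, so that the resulting thresholds depend only on $\{M,K,\zeta,\mu_1,\mu_2\}$. With that convention fixed, everything else is elementary algebra, using only $A(\mu_1)>1$, which is immediate from \eqref{A_(mu_1)} since $\mu_1>0$, $\zeta\ge 1$.

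For MR I would read the non-coherent terms off \eqref{snr_MR} and write the defining inequality
\[
\frac{\mu_2}{\zeta} \;>\; \frac{K}{M}A(\mu_1) + \frac{K}{M}\left(A(\mu_1)\mu_1 + \frac{\mu_2}{\zeta}\right) \;=\; \frac{K}{M}\left(A(\mu_1)(1+\mu_1) + \frac{\mu_2}{\zeta}\right).
\]
Multiplying through by $M\zeta/\mu_2>0$ and isolating $M$ gives $M > K\bigl(1 + \zeta A(\mu_1)(1+\mu_1)/\mu_2\bigr)$, which is exactly the stated MR threshold.

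For ZF I would do the same starting from \eqref{snr_ZF}, where the non-coherent terms are $\frac{K}{M-K}(A(\mu_1)-1)$ and $\frac{K}{M-K}A(\mu_1)\mu_1$, so the condition is
\[
\frac{\mu_2}{\zeta} \;>\; \frac{K}{M-K}\bigl(A(\mu_1)(1+\mu_1) - 1\bigr).
\]
Multiplying by $(M-K)\zeta/\mu_2>0$ and solving for $M$ yields $M > K\bigl(1 + \zeta A(\mu_1)(1+\mu_1)/\mu_2 - \zeta/\mu_2\bigr)$, the stated ZF threshold; $A(\mu_1)>1$ ensures the right-hand side is positive, so the threshold is non-vacuous.

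The main ``obstacle'' here is not computational at all but definitional: one must justify which pieces belong on the right-hand side of the comparison. Keeping the $\frac{K}{M}\frac{\mu_2}{\zeta}$ contribution inside the inter-cell term (as \eqref{snr_MR} dictates) is precisely what produces the leading ``$1$'' inside the parentheses of the MR bound, and the slightly different bookkeeping of the ZF denominator — a sacrificed factor $1$ in both the intra-cell numerator and the replacement of $M$ by $M-K$ — is what shifts the ZF threshold down by $K\zeta/\mu_2$. Once those two modelling choices are made explicit, the proof is a single rearrangement in each case.
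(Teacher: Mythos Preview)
Your proposal is correct and matches the paper's intended derivation. The paper does not supply a separate proof for this corollary; it is presented as an immediate consequence of the labelled decomposition in \eqref{snr_MR}--\eqref{snr_ZF}, and your reading of ``dominating'' as pilot contamination exceeding the sum of the intra- and inter-cell interference terms (excluding noise) is exactly what the surrounding text and Fig.~\ref{fig:Region} confirm.
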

The required value of $M$ to have larger pilot contamination (than intra- and inter-cell interference)
for ZF is smaller than the one for MR for any given values of $\lambda$ and $\zeta$. Because the ZF has lower non-coherent interference than the one for MR. Clearly, the required $M$ to define the pilot contamination region (the region where the pilot contamination is the dominating term)  is a non-decreasing function of $\lambda$.

%{\begin{remark} 
%We must note that the UatF bound \eqref{SE_UatF} is held on the basis of the channel hardening property in massive MIMO. On the other hand, it can be proved that larger spatial channel correlation can decrease the level of channel hardening\cite{massivemimobook}. Then in the general spatially correlated channel, we use another SE bound presented in the next section %\eqref{SE_Exact} 
%to illustrate more exact expression for the SE.
%\end{remark}} 

\subsection{Spectral efficiency analysis of correlated Rayleigh fading}\label{SE_Correlated}
The UatF bound \eqref{SE_UatF} is held on the basis of the channel hardening property in massive MIMO. On the other hand, larger spatial channel correlation can decrease the level of channel hardening\cite{massivemimobook}. In the general spatially correlated channel a more exact lower bound on UL SE is given by the following theorem which holds for any combining scheme, UE positions and pilot allocations. 

\begin{theorem}[Average ergodic SE\cite{pizzo2018network}]\label{Theorem1}
If MMSE channel estimation \eqref{channelEst} is used, then the UL SE of the typical UE  is lower bounded by $\underline{\mathsf{SE}}$ as follows
\begin{eqnarray}\label{SE_Exact}
\underline{\mathsf{SE}} = \left( 1 - \frac{K \zeta}{\tau_c}\right) \mathbb{E}_{\{\mathbf{d}, \mathbf{a}, \vect{h}\}} \{ \log_2\left(1 + \mathsf{SINR}_{jk}\right)\},
\end{eqnarray}
where
\begin{align}%\label{SINR_tight}
\mathsf{SINR}_{jk} = \frac{p_{jk}|\vect{v}_{jk}^H \widehat{\vect{h}}_{jk}^{j}|^2}
{ \sum\limits_{l \in \Psi_{\lambda}} \sum\limits_{\substack{i = 1 \\ (l, i) \neq (j, k)}}^K p_{li} |\vect{v}_{jk}^{\Htran} \widehat{\vect{h}}_{li}^{j}|^2 + \vect{v}_{jk}^{\Htran} {\bf{Z}}_j \vect{v}_{jk}},\label{SINR}
\end{align}
is the instantaneous SINR of the typical UE $k$ in cell $j$, 
%given by \eqref{SINR}, 
$\vect{v}_{jk} \in \mathbb{C}^{M}$ is the combining vector, and the expectations is computed with respect to the channel
realizations, pilot allocations (showed in $\widehat{\vect{h}}_{li}^{j}$), and UE positions. Also
\begin{align}
\vect{Z}_{j}= \sum\limits_{l \in \vect{\Psi}_\lambda} \sum\limits_{i=1}^K  {p}_{li}^j\vect{C}_{li}^j  + \sigma^2 \mathbf{I}_M.
\end{align}
\end{theorem}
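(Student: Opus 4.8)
The plan is to follow the standard ``use-the-estimates'' capacity-bounding technique (as in \cite{pizzo2018network,massivemimobook}): condition on the channel estimates so that the combiner output becomes a deterministic-gain channel corrupted by additive noise uncorrelated with the desired symbol. Write the signal received at BS $j$ during UL data transmission as $\vect{y}_j = \sum_{l\in\vect{\Psi}_\lambda}\sum_{i=1}^K \sqrt{p_{li}}\,\vect{h}_{li}^j s_{li} + \vect{n}_j$, with $s_{li}\sim\CN(0,1)$ i.i.d.\ data symbols and $\vect{n}_j\sim\CN(\vect{0},\sigma^2\vect{I}_M)$. Applying the combiner and inserting $\vect{h}_{li}^j = \widehat{\vect{h}}_{li}^j + \tilde{\vect{h}}_{li}^j$ for every pair $(l,i)$ gives
\begin{align}
\vect{v}_{jk}^{\Htran}\vect{y}_j &= \sqrt{p_{jk}}\,\vect{v}_{jk}^{\Htran}\widehat{\vect{h}}_{jk}^{j}\, s_{jk} + \sum_{(l,i)\neq(j,k)}\sqrt{p_{li}}\,\vect{v}_{jk}^{\Htran}\widehat{\vect{h}}_{li}^{j}\, s_{li} \nonumber\\ &\quad + \sum_{l\in\vect{\Psi}_\lambda}\sum_{i=1}^K \sqrt{p_{li}}\,\vect{v}_{jk}^{\Htran}\tilde{\vect{h}}_{li}^{j}\, s_{li} + \vect{v}_{jk}^{\Htran}\vect{n}_j .
\end{align}
Since $\vect{v}_{jk}$ is a deterministic function of the estimates $\{\widehat{\vect{h}}_{l'i'}^j\}$ (true for MR, ZF, M-MMSE and every scheme considered), conditioning on these estimates together with the UE positions and pilot allocations makes the first term a known scalar gain times $s_{jk}$, while $\vect{v}_{jk}$ becomes deterministic throughout.

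Next I would show that the remaining three terms act as an effective noise that is uncorrelated with $s_{jk}$ and whose conditional second moment equals the denominator of \eqref{SINR}. For the second term this is immediate from independence of the data symbols, contributing $\sum_{(l,i)\neq(j,k)}p_{li}|\vect{v}_{jk}^{\Htran}\widehat{\vect{h}}_{li}^{j}|^2$. For the third term I would invoke the orthogonality principle: all channels and pilot observations are jointly Gaussian and each estimate is a linear functional of $\vect{Y}^{\rm p}_j$, so the estimation error $\tilde{\vect{h}}_{li}^j$ is zero-mean and uncorrelated with the entire collection of estimates $\{\widehat{\vect{h}}_{l'i'}^j\}$ --- including the cross-correlations induced by pilot sharing in \eqref{theta_jmni}. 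Hence, conditioned on the estimates, $\tilde{\vect{h}}_{li}^j$ is still zero-mean with covariance $\vect{C}_{li}^j$ of \eqref{correlation_error}, and together with independence of the $s_{li}$ this term contributes $\sum_{l,i} p_{li}\,\vect{v}_{jk}^{\Htran}\vect{C}_{li}^j\vect{v}_{jk}$; the noise term contributes $\sigma^2\|\vect{v}_{jk}\|^2$. Collecting these and using $\vect{Z}_j=\sum_{l,i}p_{li}\vect{C}_{li}^j+\sigma^2\vect{I}_M$, the conditional effective-noise power is exactly $\sum_{l\in\vect{\Psi}_\lambda}\sum_{(l,i)\neq(j,k)}p_{li}|\vect{v}_{jk}^{\Htran}\widehat{\vect{h}}_{li}^{j}|^2 + \vect{v}_{jk}^{\Htran}\vect{Z}_j\vect{v}_{jk}$.

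Finally I would apply the worst-case uncorrelated-additive-noise lemma: for a channel $y = g s + w$ with $g$ known, $s\sim\CN(0,1)$, and $w$ of fixed second moment uncorrelated with $s$, the mutual information is minimized by Gaussian $w$ and equals $\log_2(1+|g|^2/\mathbb{E}\{|w|^2\})$. Taking $g=\sqrt{p_{jk}}\,\vect{v}_{jk}^{\Htran}\widehat{\vect{h}}_{jk}^{j}$ yields a conditional achievable rate $\log_2(1+\mathsf{SINR}_{jk})$ for the typical UE; averaging over the estimates (hence over channel realizations, pilot allocations and UE positions) and multiplying by the pre-log factor $1-K\zeta/\tau_c$, which accounts for the $\tau_p=K\zeta$ pilot samples of each coherence block of length $\tau_c$, gives $\underline{\mathsf{SE}}$ in \eqref{SE_Exact}. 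The main obstacle is the rigorous bookkeeping in the middle step: one must verify that, given the estimates, the error terms of the pilot-sharing UEs (whose estimates are correlated with $\widehat{\vect{h}}_{jk}^{j}$) remain conditionally zero-mean with covariance $\vect{C}_{li}^j$, which rests on joint Gaussianity and linearity of the MMSE estimator; the rest is routine second-moment computation.
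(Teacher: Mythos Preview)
The paper does not actually prove Theorem~\ref{Theorem1}; it states the result and cites \cite{pizzo2018network} (and implicitly \cite{massivemimobook}) for it, so there is no in-paper proof to compare against. Your argument is precisely the standard derivation that those references give: decompose the combiner output into the desired-signal-over-estimate term, the interference-over-estimates terms, the aggregate estimation-error term, and the noise term; use joint Gaussianity and linearity of the MMSE estimator to conclude that each $\tilde{\vect{h}}_{li}^j$ is conditionally zero-mean with covariance $\vect{C}_{li}^j$ given all estimates; then apply the worst-case uncorrelated-Gaussian-noise lemma and average. This is correct and is exactly the route the cited works take, so your proposal matches the intended proof.

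Two small remarks. First, the ``main obstacle'' you flag is not really an obstacle: because $(\vect{h}_{li}^j,\vect{Y}^{\rm p}_j)$ are jointly Gaussian and $\widehat{\vect{h}}_{li}^j$ is a linear function of $\vect{Y}^{\rm p}_j$, the error $\tilde{\vect{h}}_{li}^j$ is independent of $\vect{Y}^{\rm p}_j$ and hence of \emph{every} estimate $\widehat{\vect{h}}_{l'i'}^j$ simultaneously, so the conditional covariance is exactly $\vect{C}_{li}^j$ without further bookkeeping. Second, the $p_{li}^j$ appearing in the paper's definition of $\vect{Z}_j$ is a typographical slip for $p_{li}$; your version with $p_{li}$ is the correct one.
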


%\begin{figure*}[!t]
%\centerline{\subfloat[]{\includegraphics[width=.55\columnwidth]{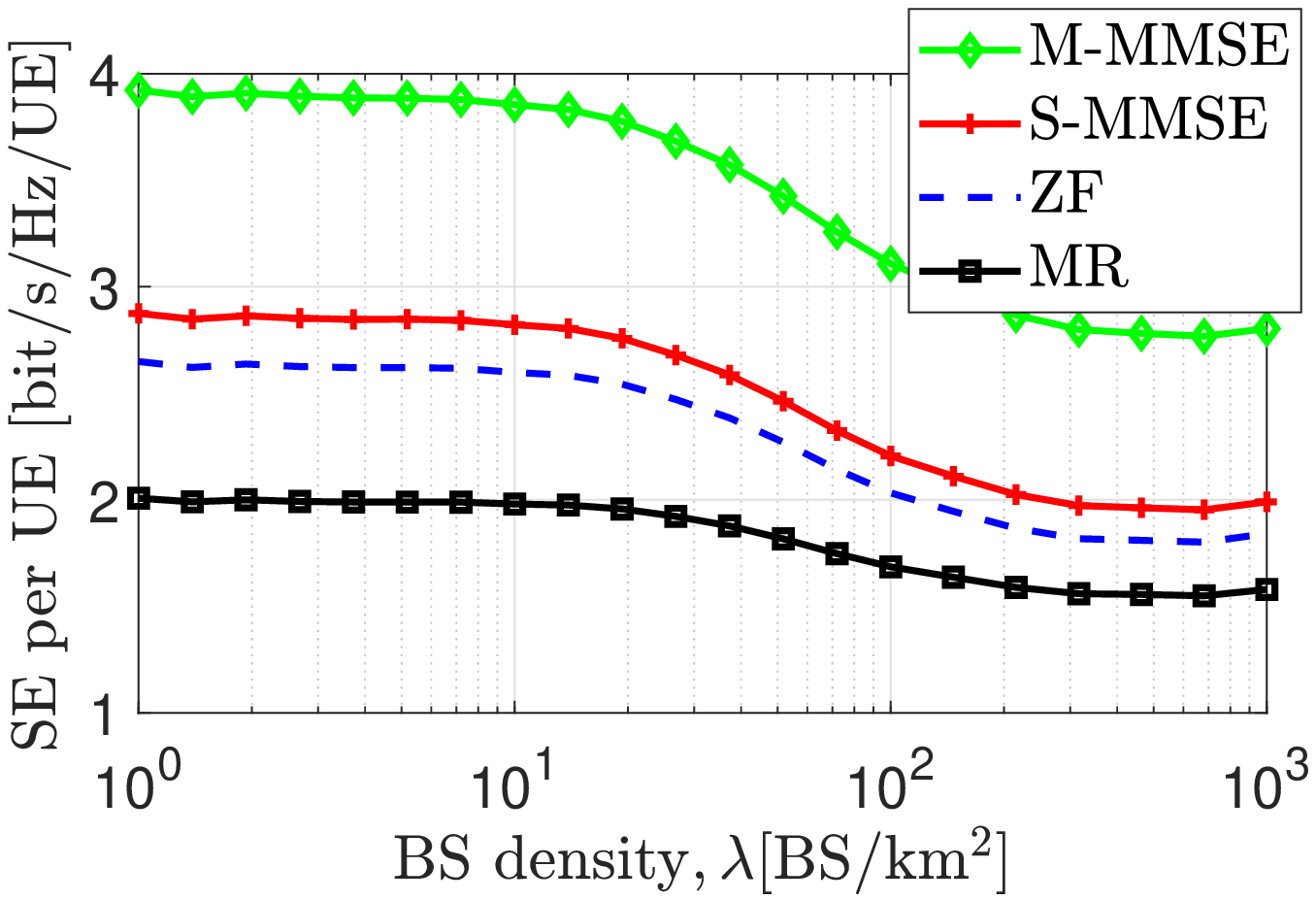}
%\label{figSE:sub-first}}
%\hfil
%\subfloat[]{\includegraphics[width=.55\columnwidth]{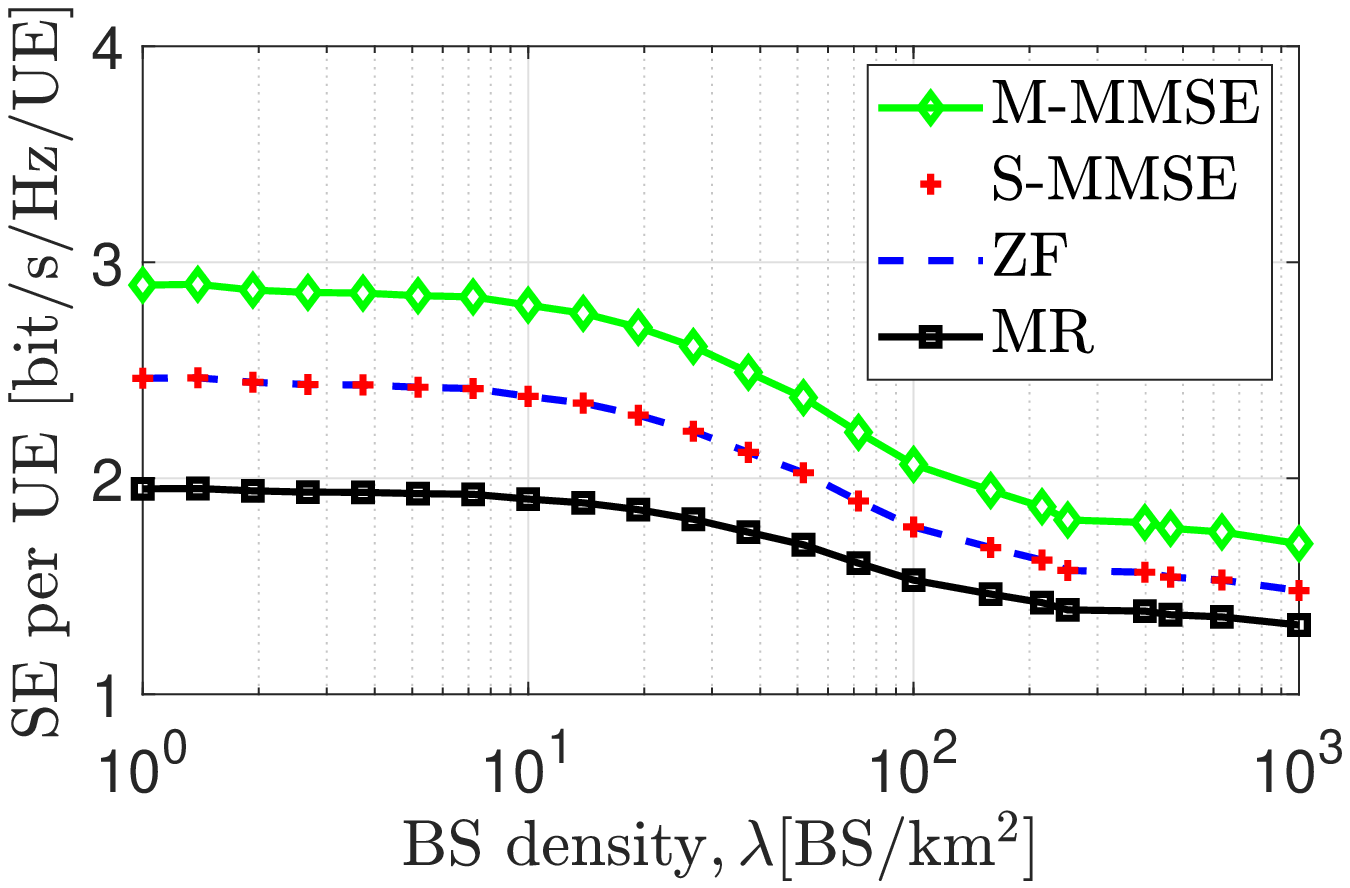}
%\label{figSE:sub-second}}}
%\caption{SE as a function of $\lambda$ $[\rm{BS/km^2}]$  based on lower bound given by \eqref{SE_Exact} for given dual-slope path loss model in Table \ref{table1}  with $M/K = {10}$, $K = 10$, $\mathsf{SNR}_0 = 5$ dB and $\zeta = 4$ for correlated Rayleigh fading with $\Delta = 10^\circ$  in (a) and uncorrelated Rayleigh fading in (b).} 
%{\label{fig:SE_MMSEBound}}
%\end{figure*}

We must note that analytic computation of the expectations in \eqref{SE_Exact} is demanding and hence we apply the numerical analysis in this case.
We notice that the SINR in \eqref{SINR} only depends on one combining vector, ${\bf v}_{jk}$, and is the form of a generalized Rayleigh quotient \cite[Lemma B.10]{massivemimobook}. Hence, the combining vector that maximizes \eqref{SINR} can be obtained: %as below. %\eqref{MMMSE}, named M-MMSE.
%\cite{massivemimobook}. 

\begin{corollary}%[\cite{massivemimobook}]
The instantaneous UL SINR in \eqref{SINR} for a typical UE $k$ in cell $j$ is maximized by a combining which is the $k^{\rm th}$ column of the following matrix \cite{massivemimobook}
\begin{align}\label{MMMSE}
\vect{V}_{j}^{\rm M-MMSE} = \left( \sum\limits_{l \in \vect{\Psi}_\lambda}  \widehat{\mathbf{H}}_{l}^{j}{\mathbf P}_l ( \widehat{\mathbf{H}}_{l}^{j} \big)^{\Htran} + \vect{Z}_{j} \right)^{-1}\widehat{\mathbf{H}}_{j}^{j}{\mathbf P}_j.
\end{align}
{where $\mathbf{P}_l$ is a $K \times K$ diagonal matrix containing the transmit powers of all UEs in cell $l$, $p_{li}, i = 1, \cdots, K$.}
\end{corollary}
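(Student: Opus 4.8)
The plan is to recognize that the instantaneous SINR in \eqref{SINR}, viewed as a function of $\vect{v}_{jk}$ alone with all channel estimates, pilot allocations and UE positions held fixed, is a generalized Rayleigh quotient, and then to invoke the standard maximizer of such a quotient. First I would rewrite the numerator of \eqref{SINR} as $p_{jk}\,\vect{v}_{jk}^{\Htran}\widehat{\vect{h}}_{jk}^{j}(\widehat{\vect{h}}_{jk}^{j})^{\Htran}\vect{v}_{jk}$, i.e. as a Hermitian quadratic form $\vect{v}_{jk}^{\Htran}\vect{A}\vect{v}_{jk}$ with rank-one $\vect{A} = p_{jk}\widehat{\vect{h}}_{jk}^{j}(\widehat{\vect{h}}_{jk}^{j})^{\Htran}$. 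Likewise I would collect the denominator into a single Hermitian quadratic form $\vect{v}_{jk}^{\Htran}\vect{B}\vect{v}_{jk}$, where
\begin{align}\label{eq:Bdef}
\vect{B} = \sum\limits_{l \in \vect{\Psi}_\lambda} \sum\limits_{\substack{i = 1 \\ (l, i) \neq (j, k)}}^K p_{li}\,\widehat{\vect{h}}_{li}^{j}(\widehat{\vect{h}}_{li}^{j})^{\Htran} + \vect{Z}_{j},
\end{align}
and $\vect{B} \succ \vect{0}$ because $\vect{Z}_j = \sum_{l,i} p_{li}^j \vect{C}_{li}^j + \sigma^2 \vect{I}_M \succeq \sigma^2 \vect{I}_M$ is positive definite (the $\vect{C}_{li}^j$ are covariance matrices, hence positive semidefinite) and the remaining sum is positive semidefinite.

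Next I would add the missing $(l,i)=(j,k)$ term to both the sum in $\vect{B}$ and, implicitly, recover it in the numerator, so that the full interference-plus-noise matrix becomes $\sum_{l \in \vect{\Psi}_\lambda} \widehat{\mathbf{H}}_l^j \mathbf{P}_l (\widehat{\mathbf{H}}_l^j)^{\Htran} + \vect{Z}_j$; this matches the inverted matrix in \eqref{MMMSE} and is the conventional bookkeeping that lets one write the SINR as $\vect{v}_{jk}^{\Htran}\vect{A}\vect{v}_{jk} / (\vect{v}_{jk}^{\Htran}(\vect{A}+\vect{B})\vect{v}_{jk} - \vect{v}_{jk}^{\Htran}\vect{A}\vect{v}_{jk})$, which is monotone in the ratio $\vect{v}_{jk}^{\Htran}\vect{A}\vect{v}_{jk}/\vect{v}_{jk}^{\Htran}(\vect{A}+\vect{B})\vect{v}_{jk}$. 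Then I would apply \cite[Lemma B.10]{massivemimobook} (the generalized Rayleigh quotient lemma): the ratio $\vect{v}_{jk}^{\Htran}\vect{A}\vect{v}_{jk}/\vect{v}_{jk}^{\Htran}(\vect{A}+\vect{B})\vect{v}_{jk}$ is maximized by $\vect{v}_{jk} \propto (\vect{A}+\vect{B})^{-1}\widehat{\vect{h}}_{jk}^{j}$, and since $\vect{A}$ is rank one with range spanned by $\widehat{\vect{h}}_{jk}^{j}$, this direction coincides with $\big(\sum_{l \in \vect{\Psi}_\lambda} \widehat{\mathbf{H}}_l^j \mathbf{P}_l (\widehat{\mathbf{H}}_l^j)^{\Htran} + \vect{Z}_j\big)^{-1}\widehat{\vect{h}}_{jk}^{j}$. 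Collecting this maximizer for each UE $k=1,\dots,K$ as the columns of a single matrix, and noting $p_{jk}$ is the $(k,k)$ entry of $\mathbf{P}_j$, gives exactly \eqref{MMMSE}.

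The only mild subtlety — and the step I would be most careful about — is the reshuffling of the $(j,k)$ self-term between numerator and denominator: one must confirm that adding $p_{jk}\widehat{\vect{h}}_{jk}^{j}(\widehat{\vect{h}}_{jk}^{j})^{\Htran}$ to the denominator's matrix does not change the argmax, which follows because $x \mapsto x/(c-x)$ is strictly increasing on $[0,c)$ so maximizing the SINR is equivalent to maximizing the ``signal over signal-plus-interference-plus-noise'' ratio that the Rayleigh-quotient lemma directly addresses. Everything else is a direct substitution into \cite[Lemma B.10]{massivemimobook}, so no genuine obstacle remains; the result is essentially the multicell MMSE combiner of \cite{massivemimobook} restated for the stochastic-geometry deployment $\vect{\Psi}_\lambda$.
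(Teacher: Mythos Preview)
Your proposal is correct and follows essentially the same approach as the paper, which simply observes (in the sentence immediately preceding the corollary) that \eqref{SINR} is a generalized Rayleigh quotient in $\vect{v}_{jk}$ and cites \cite[Lemma B.10]{massivemimobook} without further argument. Your treatment is in fact more detailed than the paper's: you make explicit the positive definiteness of the denominator matrix and the harmless reshuffling of the $(j,k)$ self-term (equivalently a Sherman--Morrison observation), both of which the paper leaves to the cited reference.
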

This combining is called M-MMSE since \eqref{MMMSE} also minimizes ${\rm{MSE}}_{jk}^{\rm {ul}} = \mathbb{E} \{|s_{jk} - \vect{v}_{jk}^{\Htran}\vect{y}_{jk}^j|^2\,|\, \{ \widehat{\vect{h}}_{li}^j \}\}$.
%, i.e., the conditional MSE between the transmitted signal, $s_{jk}$ and the received signal after performing the combining vector, $\vect{v}_{jk}^{\Htran}\vect{y}_{jk}^j$.
However, due to high complexity of M-MMSE, the single-cell processing schemes such as ZF, MR (given by \eqref{MR_ZF} in Section \ref{subsec:UatF}) and S-MMSE are widely used in the literature \cite{hoydis2013massive}:
\begin{align}\label{SMMSE}
\vect{V}_{j}^{\rm S-MMSE}= \left( \widehat{\mathbf{H}}_{j}^{j}{\mathbf P}_l ( \widehat{\mathbf{H}}_{j}^{j} \big)^{\Htran} + {\bar{\vect{Z}}}_j \right)^{-1}\widehat{\mathbf{H}}_{j}^{j}{\mathbf P}_j .
\end{align}
with ${\bar{\vect{Z}}}_j =  \sum\limits_{i=1}^K  {p}_{ji}\vect{C}_{ji}^j  + \sum\limits_{l \in \vect{\Psi}_\lambda}\sum\limits_{i=1}^K  {p}_{ji}\vect{R}_{li}^j+ \sigma^2 \mathbf{I}_M$.
%\begin{align}
%{\bar{\vect{Z}}}_j =  \sum\limits_{i=1}^K  {p}_{ji}\vect{C}_{ji}^j  + \sum\limits_{l \in \vect{\Psi}_\lambda}\sum\limits_{i=1}^K  {p}_{ji}\vect{R}_{li}^j+ \sigma^2 \mathbf{I}_M.
%\end{align}

%\begin{equation} \label{MR_ZF}
%\vect{V}_{j} \triangleq \left[ \vect{v}_{j 1} \, \ldots \, \vect{v}_{j K}  \right] = \begin{cases}
% \widehat{\vect{H}}_{j}^{j} & \!\!\!\textrm{with MR} \\ 
%\widehat{\vect{H}}_{j}^{j} \left(
% (\widehat{\vect{H}}_{j}^{j})^{\Htran} \widehat{\vect{H}}_{j}^{j}  \right)^{-1} & \!\!\!\textrm{with ZF}
%\end{cases}
%\end{equation}

{The channel estimates $\{\hat{\vect{h}}_{ji}^j:i=1,\ldots,K\}$ in the Poisson-Voronoi region of the serving BS are computed in S-MMSE, whereas the M-MMSE utilizes also the channel estimates of the UEs in other cells, which is computed locally in the BS. Then, S-MMSE is weaker than M-MMSE in decreasing the interferences from the UEs in other cells, i.e., inter-cell interference.} The ZF has the ability to reject the interferences from the UEs located in the Poisson-Voronoi region of the serving BS, i.e., intra-cell interference and MR maximizes the desired signal power. {The computational complexity of all combining schemes is as follows. The M-MMSE, S-MMSE, ZF and MR have the computational complexity of $\mathcal{O}(M^3+\lambda KM^2)$, $\mathcal{O}(M^3+ KM^2)$, $\mathcal{O}( KM^2 + K^3)$ and $\mathcal{O}(KM^2)$, respectively\cite{massivemimobook}. Assuming $M \gg K$, the M-MMSE and S-MMSE have roughly the complexity of $\mathcal{O}(M^3)$.}
%It is seen that although the M-MMSE has the best performance, it imposes the highest order of complexity to the network.}}

%Note that the pilot overhead is identical in M-MMSE and S-MMSE since the same pilots are used to estimate both intra-cell and inter-cell channels. However, only the channel estimates $\{\hat{\vect{h}}_{ji}^j:i=1,\ldots,K\}$ in the Poisson-Voronoi region of the serving BS are computed in S-MMSE, whereas the M-MMSE utilizes also the channel estimates of the UEs in other cells, which is computed locally in the BS.
%This is the main difference between M-MMSE and S-MMSE and why S-MMSE is weaker than M-MMSE in decreasing the interferences from the UEs in other cells, i.e., inter-cell interference.
%Moreover, ZF has the ability to reject the interferences from the UEs located in the Poisson-Voronoi region of the serving BS, i.e., intra-cell interference, then it performs better in high-SNR regime, whereas MR can perform well in low-SNR regime, since it maximizes the desired signal power and does not care about the interference. 

In order to observe the behavior of the SE with respect to the BS density for the combining schemes, we perform the analysis based on \eqref{SE_Exact} and illustrate them in Fig.~\ref{fig:SE_MMSEBound} with parameters $M/K = {10}$, $K = 10$, $\zeta = 4$. Although the analysis is valid for any path loss model, we choose the dual-slope path loss model reported in {Table \ref{table1}.}
% However, when the channel is modeled as uncorrelated Rayleigh fading and the MR or ZF is employed as the combining scheme,  the theoretical analysis can be applied to compute a lower bound for the SE \cite{pizzo2018network} which will be explained more in Section \ref{subsec:UatF}.
The SE versus $\lambda$ for the correlated Rayleigh fading is illustrated in Fig.~\ref{figSE:sub-first} with $\Delta = 10^\circ$ and for the uncorrelated channel is shown in Fig.~\ref{figSE:sub-second}. 
%In order to have a comparison, we also illustrate the performance when the channel is uncorrelated in Fig.~\ref{figSE:sub-second}.

%\begin{figure*}[!t]
%\centerline{\subfloat[]{\includegraphics[width=0.55\columnwidth]{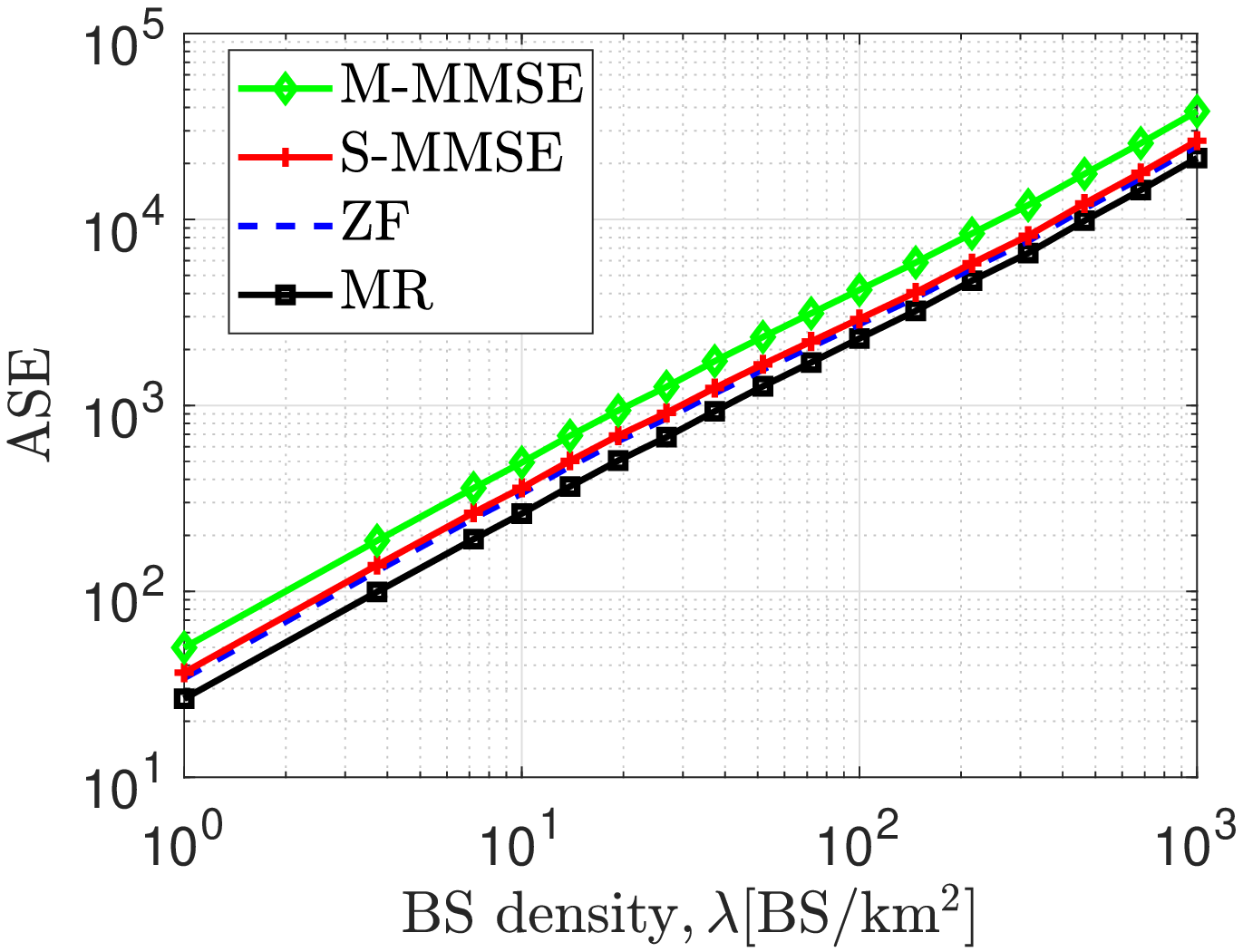}
%\label{figASE:sub-first}}
%\hfil
%\subfloat[]{\includegraphics[width=0.55\columnwidth]{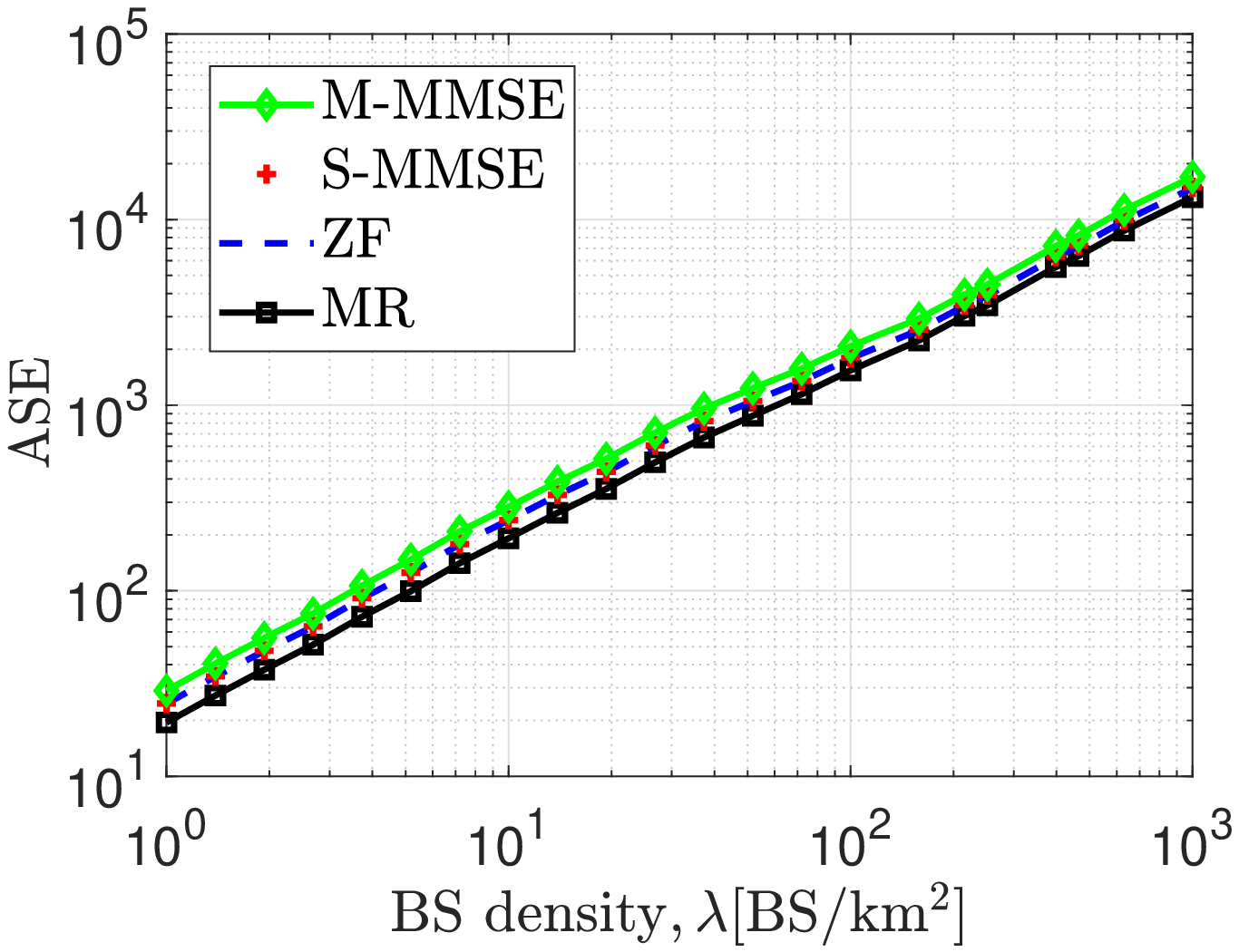}
%\label{figASE:sub-second}}}
%\caption{ASE as a function of $\lambda$ $[\rm{BS/km^2}]$  based on lower bound given by \eqref{SE_Exact} for given dual-slope path loss model in Table \ref{table1}  with $M/K = {10}$, $K = 10$, $\mathsf{SNR}_0 = 5$ dB and $\zeta = 4$ for correlated Rayleigh fading with $\Delta = 10^\circ$ in (a) and uncorrelated Rayleigh fading in (b) .} {\label{fig:ASE_MMSEBound}}
%\end{figure*}

We observe in both figures that the SE per each UE is a non-increasing function of $\lambda$ for all schemes. 
This comes from the fact that, as the network becomes densified, both (intra- and inter-cell) interferences  and pilot contamination are increasing, {whereas due to the given power control \eqref{ULpower}}, the BS experiences a uniform SNR (Section \ref{subsec:powercontrol}) % \eqref{DesiredSignalPower} 
across the network. Hence, the SINR and then the SE are non-increasing functions of $\lambda$ which are also proved theoretically for MR and ZF through previous section for the uncorrelated Rayleigh fading. Since the dual-slope path loss model is simplified to the single-slope with $\alpha = \alpha_1$ when $\lambda \rightarrow \infty$ and $\alpha = \alpha_2$ when $\lambda \rightarrow 0$, the SE per each UE is independent  of $\lambda$ in this single-slope regions\cite{fahim, bjornson2016deploying}. As we observe, the network with $\lambda > 200$ $[\rm BS/km^2]$ is the dense network with $\alpha = 2.1$ and the network with $\lambda < 10$ $[\rm BS/km^2]$ is the low dense with $\alpha = 4$. 
%\textcolor{red}{The results can be extended to the multi-slope path loss model with $N > 2$. For example, if we have another slope with $ \alpha_1 < \alpha' < \alpha_2$, we observe another convergence region based on $\lambda$ ($10 < \lambda < 200$) and the SE obtained for single-slope model with $\alpha = \alpha'$.}

\begin{figure*}[!t]


\centerline{\subfloat[]{\includegraphics[width=.55\columnwidth]{fig2.eps}
\label{figSE:sub-first}}
\hfil
\subfloat[]{\includegraphics[width=.55\columnwidth]{fig3.eps}
\label{figSE:sub-second}}}
\caption{SE as a function of $\lambda$ $[\rm{BS/km^2}]$  based on lower bound given by \eqref{SE_Exact} for given dual-slope path loss model in Table \ref{table1}  with $M/K = {10}$, $K = 10$, $\mathsf{SNR}_0 = 5$ dB and $\zeta = 4$ for correlated Rayleigh fading with $\Delta = 10^\circ$  in (a) and uncorrelated Rayleigh fading in (b).} 
{\label{fig:SE_MMSEBound}}
\end{figure*} 
 
In addition, we observe from Fig.~\ref{fig:SE_MMSEBound} that each combiner scheme in the correlated fading provides higher SE than the related one for the uncorrelated in all BS density values. {This comes from 
that the channel estimation accuracy is higher in the correlated fading than the uncorrelated one for all different values of $\lambda$ which will also be shown in Fig. \ref{fig:NMSE} of Section \ref{sec:simulationresults}.}
Moreover, we demonstrate in Fig.~\ref{fig:SE_MMSEBound} that the multi-cell processing, M-MMSE holds high SE in all BS density values, since it is designed to suppress both intra-cell and inter-cell interference.
%the interference from the associated UEs, i.e., both intra-cell and inter-cell ones.
 As we observe, in the uncorrelated fading, as $\lambda$ increases and both the inter-cell interference and the pilot contamination become larger, the gap between the single-cell and multi-cell combining schemes gets smaller. This means that the M-MMSE ability in reducing the interference is not substantial in the dense network when the channel is uncorrelated. However,  in the correlated fading, the reduction rate of the SE obtained by the M-MMSE with respect to $\lambda$ is not as much as the one for the uncorrelated case. 
%In other words, when the channel is correlated, the M-MMSE performs better especially in the dense network.
%For example, in a dense network with $\lambda \simeq 80$ $[\rm BS/km^2]$,  the M-MMSE in the correlated channel  has at least $1.45$ times SE improvement than the one for the uncorrelated case. 
{Because, considering \eqref{channelEst}, we have $\widehat{\vect{h}}_{jk}^j - c \widehat{\vect{h}}_{lk}^{j} = ( \sqrt{ p_{jk}} \vect{R}_{jk}^j - c \sqrt{ p_{lk}}\vect{R}_{lk}^j) \mathfrak{t}_{jk}$ where $\mathfrak{t}_{jk} =  \vect{R}_{jk}^{j} (\vect{Q}_{lk}^{j})^{-1}\left( \vect{Y}^{\rm p}_{j} \boldsymbol{\phi}_{lk}^{*}\right)$. If $\vect{R}_{jk}^{j}$  and $\vect{R}_{lk}^{j}$ are not linearly  dependent, there are not any values of $c$ which result in $\widehat{\vect{h}}_{jk}^j - c \widehat{\vect{h}}_{lk}^{j} = 0$. This non-zero difference means that the channel estimates are linearly independent which is more probable to happen in the real channel. Hence, we can design a combiner scheme which is not only orthogonal to the channel of interfering UE, $\widehat{\vect{h}}_{lk}^j$, but also has a non-zero inner product with its desired channel $\widehat{\vect{h}}_{jk}^j$ \cite{BHS18A}.}
%We describe the reason by one example. We remind that the channel estimates $\widehat{\vect{h}}_{jk}^j$ is the estimate of the channel between the UE $k$ in cell $j$ and BS $j$ and $\widehat{\vect{h}}_{lk}^{j}, l \neq j$ is  the estimate of the channel  of an interfering pilot-sharing UE $k$ in cell $l$ and the same BS. The comparison of the two channel estimates reveals that:
%\begin{eqnarray}\label{channelestimatedifference}
%\widehat{\vect{h}}_{jk}^j - c \widehat{\vect{h}}_{lk}^{j} = ( \sqrt{ p_{jk}} \vect{R}_{jk}^j - c \sqrt{ p_{lk}}\vect{R}_{lk}^j) \mathfrak{t}_{jk},
%\end{eqnarray}
%where
%$\mathfrak{t}_{jk} =  \vect{R}_{jk}^{j} (\vect{Q}_{lk}^{j})^{-1}\left( \vect{Y}^{\rm p}_{j} \boldsymbol{\phi}_{lk}^{*}\right)$. If the correlation matrices, $\vect{R}_{jk}^{j}$  and $\vect{R}_{lk}^{j}$ are not linearly  dependent, there are not any values of $c$ which results in $\widehat{\vect{h}}_{jk}^j - c \widehat{\vect{h}}_{lk}^{j} = 0$. This non-zero difference means that the channel estimates are not parallel i.e., they are linearly independent which is more probable to happen in the real channel. Hence, we can design a combiner scheme which is not only orthogonal to the channel of interfering UE, $\widehat{\vect{h}}_{lk}^j$, but also has a non-zero inner product with its desired channel $\widehat{\vect{h}}_{jk}^j$ \cite{BHS18A, SanguinettiBH19}. 
When the network becomes densified, the number of pilot-sharing UEs are also increasing, then the M-MMSE, as an optimum combiner, is able to reject/reduce the interference from the pilot-sharing UEs. 
In the S-MMSE, the average channel information from the UEs in other cells is considered, and hence the correlation feature also leads to the observation that S-MMSE has a bit better performance than the ZF in the correlated channel. Although when the channel is uncorrelated, the S-MMSE shows roughly the same SE as ZF. 

{In addition, the Fig.~\ref{fig:SE_MMSEBound} represents that the reduction rate of the SE obtained by MR with respect to $\lambda$ is lower than other combiners (Corollary \ref{corollary_ComparingReductionRate}). Since MR aims to maximize the desired received signal power, and it is not supposed to reduce the interference; however, other combiners sacrifice some numbers of spatial dimensions to overcome the interference. Then, in the dense network with much higher interference, this appears as larger reduction rate of the SE with respect to $\lambda$.} 

{Table \ref{table2} shows the variation of the angular spread on the SE and also ASE in which a low dense and a dense network are considered with $\lambda = 10 \ [\rm{BS/km^2}]$ and $\lambda = 50\ [\rm{BS/km^2}]$, respectively and  {$\Delta = \{0^\circ, 5^\circ, 10^\circ\}$.} We observe that higher SE and hence ASE are achieved in the channel with high correlation (smaller value of $\Delta$) by all combining schemes, because the channel with high correlation has lower NMSE.} 
%\textcolor{blue}{We must note that $\Delta = 0$ is the ideal value which never happens and just is added for comparison.}
Moreover, it is interesting to consider ASE $[{\rm bit/sec/Hz/km^2}]$  of the network which is given by ${\rm ASE} = \lambda . K . {\rm SE}$.
%\begin{eqnarray}
%{\rm ASE} = \lambda K {\rm SE}.
%\end{eqnarray}
{The ASE for both correlated and uncorrelated channels are illustrated in  Fig. \ref{fig:ASE_MMSEBound}.
%which is illustrated in Fig. \ref{fig:ASE_MMMSE} for the optimal combiner, M-MMSE by considering both uncorrelated and correlated Rayleigh fading. 
Although  the SE per UE has a non-increasing behavior versus $\lambda$, we observe that ASE increases with the BS density. Because the coefficient $\lambda . K$ in ASE compensates the effect of the multi slope path loss shown in SE.}

% \begin{figure*}
%\centering
%\begin{equation}\label{SINR_Uatf}
%\mathsf{SINR}^{\rm UatF} = \frac{p_{jk}|\mathbb{E}_{\{\mathbf{h}, \mathbf{a}\}}\{\vect{v}_{jk}^H {\mathbf{h}}_{jk}^{j}\}|^2}
%{ \sum\limits_{l \in \vect{\Psi}_{\lambda}} \sum\limits_{{i = 1 }}^K p_{li} \mathbb{E}_{\{\vect{h}, \mathbf{a}\}}\{|\vect{v}_{jk}^H {\vect{h}}_{li}^{j}|^2\} - p_{jk}|\mathbb{E}_{\{\vect{h}, \mathbf{a}\}}\{\vect{v}_{jk}^H {\vect{h}}_{jk}^{j}\}|^2 + \sigma^2 \mathbb{E}_{\{\vect{h}, \mathbf{a}\}}\{||\vect{v}_{jk}||^2\}}. 
%\end{equation}
%\hrule
%\end{figure*}

\begin{table*}[t]
\centering
\caption{{SE and ASE for correlated Rayleigh fading with angular spread, $\Delta$}} 
\begin{tabular}{|c||c|c|c|c|c|c||c|c|c|c|c|c|}
\hline
 & \multicolumn{6}{|c||}{$\lambda = 10$ $[{\rm BS / km^2}]$} & \multicolumn{6}{|c|}{ $\lambda = 50$ $[{\rm BS / km^2}]$} \\ 
%& $\lambda = 10$ & & & & $\lambda = 51.8$ & \\
\hline
 & \multicolumn{2}{|c|}{$\Delta = 0^\circ$} & \multicolumn{2}{|c|}{$\Delta = 5^\circ$} & \multicolumn{2}{|c||}{$\Delta = 10^\circ$} & \multicolumn{2}{|c|}{$\Delta = 0^\circ$} & \multicolumn{2}{|c|}{$\Delta = 5^\circ$} & \multicolumn{2}{|c|}{$\Delta = 10^\circ$} \\
 \hline
%Scheme & $\Delta = 0^\circ$ & $\Delta = 5^\circ$  & $\Delta = 10^\circ$ & $\Delta = 0^\circ$ & $\Delta = 5^\circ$  & $\Delta = 10^\circ$ \\
Scheme &  SE & ASE  & SE & ASE & SE  & ASE   &  SE & ASE  & SE & ASE & SE  & ASE \\
\hline   \hline
MR &  2.93 & 293 &  2.20 & 220 &  1.98 & 198 & 2.67 & 1335  & 1.95 & 975 & 1.80 & 900 \\ 
\hline
ZF & 3.39 & 339 &  2.87 & 287 & 2.60 & 260 & 2.99 & 1495  & 2.40 & 1200 & 2.25 & 1125 \\
\hline
S-MMSE & 5.21 & 521 & 3.14 & 314 & 2.84 & 284 & 4.88 & 2440 & 2.63 & 1315 & 2.44 & 1220 \\
\hline
M-MMSE & 5.33 & 533 & 4.4 & 440 & 3.86 & 386 & 5.02 & 2510 & 3.86 & 1930 & 3.39 & 1695 \\
\hline
\end{tabular}\label{table2}
\end{table*}

\begin{figure*}[!t]


\centerline{\subfloat[]{\includegraphics[width=0.55\columnwidth]{fig4.eps}
\label{figASE:sub-first}}
\hfil
\subfloat[]{\includegraphics[width=0.55\columnwidth]{fig5.eps}
\label{figASE:sub-second}}}
\caption{ASE as a function of $\lambda$ $[\rm{BS/km^2}]$  based on lower bound given by \eqref{SE_Exact} for given dual-slope path loss model in Table \ref{table1}  with $M/K = {10}$, $K = 10$, $\mathsf{SNR}_0 = 5$ dB and $\zeta = 4$ for correlated Rayleigh fading with $\Delta = 10^\circ$ in (a) and uncorrelated Rayleigh fading in (b) .} {\label{fig:ASE_MMSEBound}}
\end{figure*}

\section{Simulation results}\label{sec:simulationresults}
{In this section we illustrate the network performance based on the network parameters, $\lambda$, $M/K$, $\zeta$ and $\Delta$ by considering the path loss model given in {Table \ref{table1}.} We have the explanatory paragraph at the beginning of each figure which states the specified values of the network parameters used in it.}
% that particular figure.}

% interplay between the network parameters for uncorrelated and correlated Rayleigh fading
%by considering the path loss model given in Table \eqref{table1} and 
% the same set of parameters as previous section, i.e., $M/K = {10}$, $K = 10$, $\zeta = 4$ and the path loss model given in Table \eqref{table1}.

\subsection{Channel estimation accuracy}
{Fig.~\ref{fig:NMSE} illustrates the NMSE for a correlated channel as a function of the BS density $\lambda$ when the angular spread, $\Delta = 5^\circ$ and $10^\circ$ and the number of antennas, $M = 100$. Two pilot reuse factors $\zeta = 1$ and $4$ are considered. The NMSE for the uncorrelated channel is also shown for comparisons. In a network with higher BS density, the estimated channel experiences larger interference from pilot-sharing UEs which causes more NMSE. Moreover, the channel with high correlation (smaller value of $\Delta$) has smaller NMSE. 
Intuitively, larger eigenvalues of the correlation matrix represent that there are some strong eigendirections i.e., with higher SNR which can be estimated independently with lower estimation error. {In the case of the uncorrelated fading, all the eigenvalues are the same, while in the correlated one, the eigenvalue decomposition spreads into different values wherein larger ones improve the NMSE. The eigenvalues of the typical correlation matrix are shown in Fig.~\ref{fig:Eigenvalue} for the correlated channel with $\Delta = 5^\circ$, $\Delta = 10^\circ$, and also the uncorrelated channel.}
With high spatial correlation (i.e., $\Delta = 5^\circ$), the NMSE is marginally affected by the BS density in the low dense and dense networks. An angular spread of $\Delta = 10^\circ$ is already enough to make the NMSE increase rapidly with $\lambda$, especially with 
%full pilot reuse factor, 
$\zeta = 1$. To further reduce the NMSE, one must increase $\zeta$.}

\begin{figure}%[t] 
\centering
\includegraphics[width=0.6\textwidth]{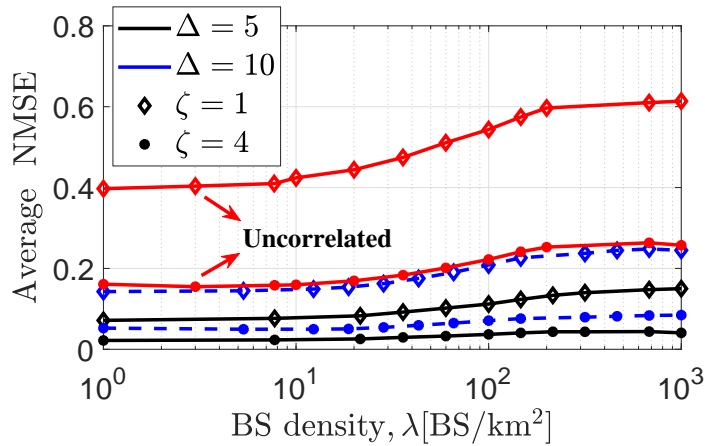}
\caption{Average NMSE as a function of $\lambda$ $[\rm{BS/km^2}]$ in the correlated Rayleigh fading channel with angular spread, $\Delta = \{5^\circ, 10^\circ\}$, pilot reuse factor $\zeta = \{1, 4\}$, $M = 100$ and given dual-slope path loss model in Table \ref{table1}. The uncorrelated case is also obtained.}\label{fig:NMSE}
%\vspace{-0.5cm}
\end{figure}

\begin{figure}%[t] 
\centering
\includegraphics[width=0.6\textwidth]{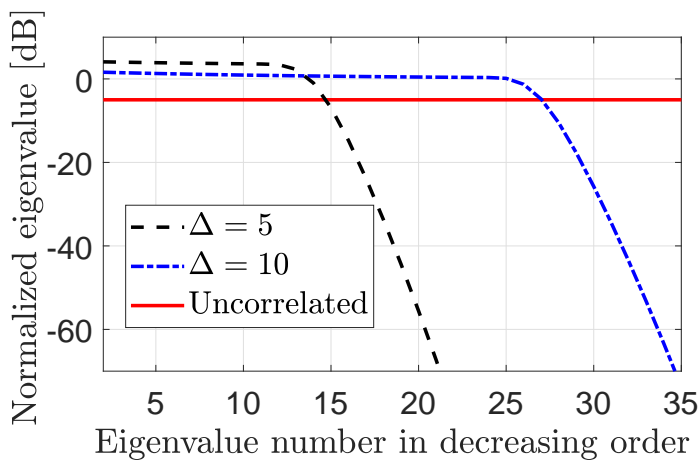}
\caption{{Eigenvalues of the spatial correlation matrix in the correlated Rayleigh fading channel with angular spread, $\Delta = \{5^\circ, 10^\circ\}$.}}\label{fig:Eigenvalue}
%\vspace{-0.5cm}
\end{figure}

\subsection{Spatially uncorrelated fading}
We consider the UatF bound in \eqref{SE_UatF} simplified to \eqref{snr_MR} and \eqref{snr_ZF} and illustrate the SE versus $\lambda$ for MR and ZF through Fig.~\ref{fig:SE_UatF}. We consider not only the practical antenna-UE ratio ${M/K=10}$  but also ${M/K=50}$ (for comparison with the asymptotic case). The latter case is studied to evaluate the gap with the ultimately achievable rate $R_{M\to \infty}$ in \eqref{R_inf}, shown in Fig.~\ref{fig:SE_UatF} with the optimal pilot reuse factor derived in Corollary \ref{corollary_optimalPRF}. However, all the other curves use the optimal pilot reuse factor obtained by exhaustive search. {A simple computation from \eqref{zeta_inf} shows that the optimal $\zeta$ for the asymptotic case is $5$ for $\lambda \leq 30 $ and $6$ for $\lambda > 30 $. While for the finite ratio of $M/K$, these are roughly $4$  and $5$ for $\lambda \leq 30$  and $\lambda > 30\ [{\rm BS/km^2}]$, respectively}. %Then, for larger values of $\lambda$, increasing $\zeta$ is not so effective.}
 Although, the ZF outperforms MR for $\lambda \leq 10$, the gain reduces as $\lambda$
increases. Both schemes achieve the same performance for
$\lambda > 30$. This is because the inter-cell interference becomes the
dominating term in \eqref{snr_MR} and \eqref{snr_ZF} when the BSs are much
closer to each other. 
We demonstrate  that both MR and ZF provide low SE especially at larger $\lambda$, even when $M/K$ takes large values (up to 50). 

\begin{figure}%[t] 
\centering
\includegraphics[width=0.6\textwidth]{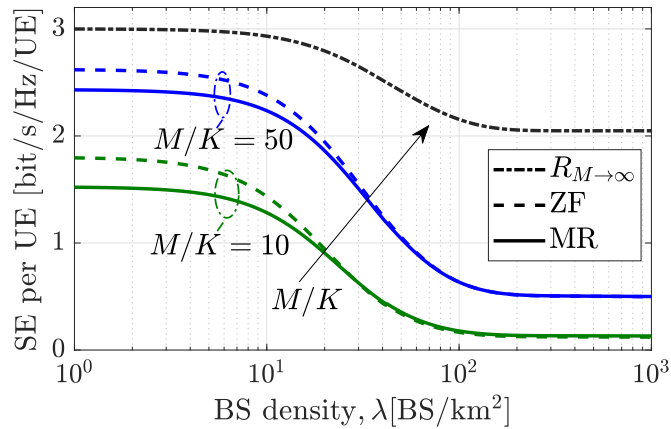}
\caption{SE as a function of $\lambda$ $[\rm{BS/km^2}]$ based on UatF bound for the ZF and MR in the uncorrelated Rayleigh fading channel. The dual-slope path loss model in Table \ref{table1} is adopted, $M/K = \{10, 50\}$, $K = 10$.}\label{fig:SE_UatF}
%\vspace{-0.5cm}
\end{figure}

\begin{figure}%[t!]\vspace{-0.5cm}
\centering
\includegraphics[width=0.6\textwidth]{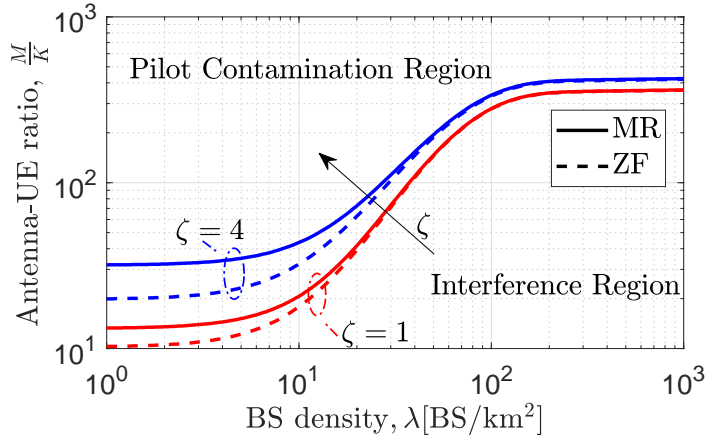}
 \caption{Antenna-UE ratio $M/K$ needed for the interference to equal pilot contamination as a function of $\lambda$. The dual-slope path loss model in Table \ref{table1} is adopted. Also, $\zeta = \{1,4\}$ 
% and  $\mathrm{SNR}_0 = 5$ dB 
 and $K=10$.}
\label{fig:Region}
%\vspace{-0.5cm}
\end{figure}

Fig.~\ref{fig:Region} illustrates the antenna-UE ratio $M/K$ for which the non-coherent interference and pilot contamination in \eqref{snr_MR} and \eqref{snr_ZF} are equal to each other obtained in Corollary \ref{corollary_dominatingterm}. The curves are plotted as a function of $\lambda$ for two values of pilot reuse factor ${\zeta\in \{1,4\}}$ and must be understood in the sense that the pilot contamination is dominant in the region above each one. As expected, MR has smaller pilot contamination region than ZF for any values of the BS density.  However, as the network becomes denser, the gap between the MR and ZF gets smaller since the inter-cell interference would be dominant.
{Remarkably, we observe that for practical values of $M/K$ in the interval ${4\le M/K \le 10}$,
%\cite{massivemimobook}, 
regardless of $\lambda$ and $\zeta$, the pilot contamination region is never experienced, i.e., the non-coherent interference is the major interference.}
However, for larger values of $M/K$, $10 \le M/K \le 100$, the non-coherent interference always
dominates the pilot contamination in $\lambda \ge 50$. Hence, we conclude from Fig.~\ref{fig:Region} that in the dense network, the interference in massive MIMO plays a major role and the pilot contamination is dominant only for impractical values of $M/K$.
%More generally, the
%results of Fig.~\ref{fig:Region} show that in dense and ultra-dense Massive
%MIMO networks interference is the major impairment. Pilot
%contamination becomes relevant only for impractical values of
%$M/K$.

\begin{table*}[t]
\centering
\caption{{Average UL power of the non-coherent interference and coherent interference ([dB]) for the typical UE for both uncorrelated and correlated Rayleigh fading with $\Delta = 10^{\circ}$, $M = 100$, $K = 10$ and $\zeta  = 4$}} 
\begin{tabular}{|c||c|c||c|c|}
\hline
 & \multicolumn{2}{|c||}{$\lambda = 10$ $[{\rm BS / km^2}]$} & \multicolumn{2}{|c|}{ $\lambda = 50$ $[{\rm BS / km^2}]$} \\ 
\hline
Scheme& Uncorr.  & Corr.  & Uncorr.  & Corr. \\
&  (non-coherent / coherent) &  (non-coherent / coherent) &  (non-coherent / coherent) &  (non-coherent / coherent) \\
\hline   \hline
MR &  27.35 / 21.68 & 30.4 / 13.93   &   27.96 / 22.88 & 31 / 15.16   \\ 
\hline
ZF &   24.60 / 21.31 &  25.91 / 13.11  &   25.73 / 22.52  & 27.06 / 14.38 \\
\hline
M-MMSE & 22.31 / 21.07 & 17 / 6.27  & 23.62 / 22.27 & 18.89 / 7.95  \\
\hline
\end{tabular}\label{table3}
\end{table*}

%\begin{figure*}[!t]
%\centerline{\subfloat[]{\includegraphics[width=1\columnwidth]{Figs/CoherentIntf_Uncorr.eps}
%\label{figCoherent:sub-first}}
%\hfil
%\subfloat[]{\includegraphics[width=1\columnwidth]{Figs/CoherentIntf_Corr.eps}
%\label{figASE:sub-second}}}
%\caption{Coherent and non-Coherent Interference} {\label{fig:Coherentnoncoherent}}
%\end{figure*}

\subsection{Impact of spatial channel correlation}
{As we mention, angular spread is the parameter of spatial channel correlation and show how correlated the channel is.  
The impact of correlation on coherent (pilot contamination) and non-coherent interference reduction is given by Table \ref{table3} for $\Delta = 10^\circ$, $M = 100$, $K = 10$, $\zeta = 4$ and also both low dense and dense network with $\lambda = 10\ [{\rm BS/km^2}]$ and $\lambda = 50\ [{\rm BS/km^2}]$, respectively. Although, the coherent interference of all schemes are comparable for uncorrelated channel, the spatial correlation leads to significant reduction of coherent interference, about 7 dB for MR, 8 dB for ZF and 14 dB for M-MMSE. 
Besides, as we explained in Section \ref{SE_Correlated}, the M-MMSE is able to decrease the non-coherent interference. This table also shows that dominant region in correlated Rayleigh fading for practical values of $M / K$ is non-coherent interference.}
Moreover, we investigate the tradeoff between angular spread and antenna-UE ratio $M/K$ to achieve the determined SE. In order to achieve the  SE = 3 $[\rm bit/s/Hz/UE]$, in the dense network with $\lambda = 50$ $[\rm BS/km^2]$, the required $M/K$ in terms of angular spread is shown in Fig.~\ref{fig:M/KvsASD}. Clearly, M-MMSE needs lower $M/K$ to reach the goal. Moreover, as the channel becomes correlated, we need lower $M/K$ to achieve the determined performance. On the other hand, the required $M/K$ for ZF and S-MMSE converge to each other at larger angular spread.
{We also illustrate the ASE in terms of $M/K$ in Fig.~\ref{fig:ASEvsM/K} for the correlated Rayleigh fading with $\Delta = 10^\circ$ and a dense network with $\lambda = 50\ [{\rm BS/km^2}]$ considering different values of $\zeta = \{1, 2, 4\}$. As we observe, the M-MMSE scheme with $\zeta = 2$ or $4$ has almost the same ASE; however, $\zeta =4$ leads to the ASE reduction in all other single-cell processing schemes. Because, there is a trade-off between the NMSE and the SE when increasing $\zeta$. While increasing $\zeta$ reduces the NMSE, the duration of the data transmission phase is also reduced which is shown as a coefficient $(1 - \frac{\zeta K}{\tau_c})$ in \eqref{SE_UatF} and \eqref{SE_Exact}.}

\begin{figure}%[t] 
\centering
\includegraphics[width=0.6\textwidth]{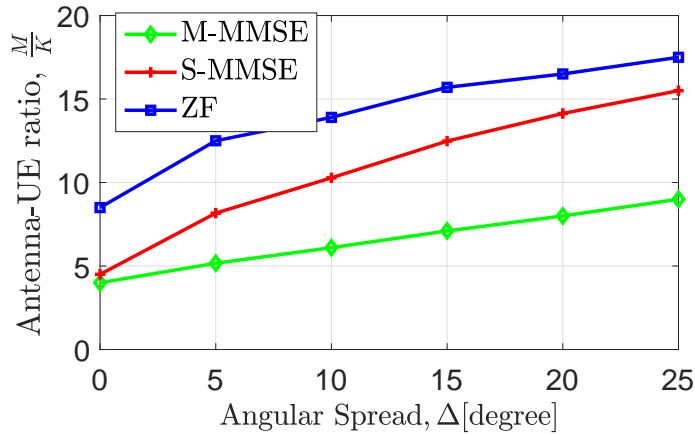}
\caption{Antenna-UE ratio $M/K$ needed for {SE} = 3 $[\rm bit/s/Hz/UE]$ as a function of angular spread $\Delta$ at $\lambda = 50$ $[\rm{BS/km^2}]$. The dual-slope path loss model is adopted. Also, $\zeta = 4$ 
 and $K=10$.}\label{fig:M/KvsASD}
 %\vspace{-0.5cm}
\end{figure}

\begin{figure}%[t!]\vspace{-0.5cm}
\centering
\includegraphics[width=0.6\textwidth]{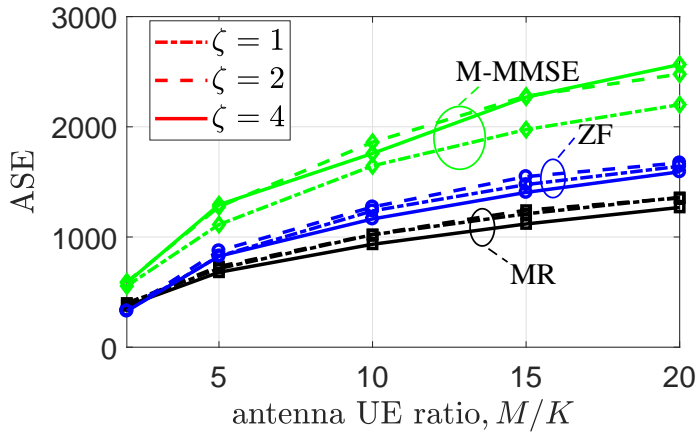}
 \caption{{ASE as a function of $M/K$  based on lower bound given by \eqref{SE_Exact} for given dual-slope path loss model in Table \ref{table1}  with  $\lambda = 50 [\rm{BS/km^2}]$, $\mathsf{SNR}_0 = 5$ dB and $\zeta = \{1, 2, 4\}$ for correlated Rayleigh fading with $\Delta = 10^\circ$.}}
\label{fig:ASEvsM/K}
%\vspace{-0.5cm}
\end{figure}

%\begin{figure}%[t] 
%\centering
%\includegraphics[width=0.52\textwidth]{Figs/ASE_MMMSE.eps}
%\caption{ASE as a function of $\lambda$ $[\rm{BS/km^2}]$ for the M-MMSE scheme in both correlated and uncorrelated Rayleigh fading, for given dual-slope path loss model in Table \ref{table1}, $M/K = 10$, $K = 10$.}\label{fig:ASE_MMMSE}\vspace{-0.5cm}
%\end{figure}

\section{Conclusion}\label{sec:conclusion}
We studied the impact of BS densification on the channel estimation accuracy and SE for the UL of the massive MIMO systems in the correlated Rayleigh fading.
First, theoretical analysis was done for MR and ZF in the uncorrelated fading which reveals the interplay between the network parameters. Particularly, we quantified the least $M/K$ based on $\lambda$ for which the pilot contamination dominates the (intra- and inter-cell) interference in the uncorrelated fading and showed that the pilot contamination plays a major role at any $\lambda$
for impractical values of $M/K$, i.e., $M/K > 10$.
In the general channels, we showed that the SE obtained by all combining schemes is a monotonic non-increasing function of the BS density, $\lambda$, while the M-MMSE provide notable SE in all $\lambda$ which is remarkable in the dense network, and in comparison with the uncorrelated Rayleigh fading, we observed that the M-MMSE performance degrades. Moreover, the resulting ASE is a non-decreasing function of $\lambda$.  We also concluded that as the channel becomes correlated, we need lower antenna-UE ratio to
achieve the desired performance in terms of SE.

%\section{Acknowledgement}
%The authors appreciate Professor Luca Sanguinetti who provided insight and expertise that greatly assisted the research.

\appendix \label{App} 
%[A. Proof of Corollary \ref{corollary_NMSE}]\label{App.A} 

%\begin{center}%\label{App.A0}
%\textbf{Appendix A. Proof of Corollary \ref{corollary_NMSE2}}
%\end{center}
%According to \eqref{correlation_error} and due the fact that the operators trace, $\tr(.)$ and  and the expectation, $\mathbb{E}\{.\}$  are the linear operators, we have
%\begin{eqnarray} \nonumber
%\mathsf{NMSE^{Corr}} &=& 1 - \mathbb{E}_{\{\bf d \}} \left\{ \frac{1}{\tr(\vect{R}_{jk}^j)} \mathbb{E}_{\{\bf a\}} \left\{  \tr( p^{\rm p}_{jk} \tau_p\vect{R}_{jk}^j  (\vect{Q}_{jk}^{j})^{-1} \vect{R}_{jk}^j )\right\}  \right\} \\ &=& \nonumber
%1 - \mathbb{E}_{\{\bf d \}} \left\{ \frac{1}{\tr(\vect{R}_{jk}^j)}  \tr\left( p^{\rm p}_{jk} \tau_p \vect{R}_{jk}^j  \mathbb{E}_{\{\bf a\}}\{ (\vect{Q}_{jk}^{j})^{-1}\} \vect{R}_{jk}^j\right)  \right\} 
%%\\ & \leq &
%%1 - \mathbb{E}_{\{\bf d \}} \left\{ \frac{1}{\tr(\vect{R}_{jk}^j)}  \tr\left( p^{\rm p}_{jk}\vect{R}_{jk}^j  (\overline{\vect{Q}_{jk}^{j}})^{-1} \vect{R}_{jk}^j\right)  \right\}
%\end{eqnarray}
%and we can use the inequality $\mathbb{E}_{\{\bf a\}}\{ (\vect{Q}_{jk}^{j})^{-1}\} \geq (\mathbb{E}_{\{\bf a\}} \{\vect{Q}_{jk}^{j}\})^{-1} $ according to \cite{groves1969note}.

%\section[Proof%\label{App.A}
\begin{center}%\label{App.A}
\textbf{Appendix A. Proof of Corollary \ref{corollary_NMSE}}
\end{center}
{Regarding \eqref{correlation_error} and \eqref{channelEst_Uncorr}, we have $\tr\{\vect{C}_{jk}^j\} = M (\beta_{jk}^j - \gamma_{jk}^j) $.} Then, we can compute the expectation on pilot allocations by applying Jensen's inequality as follows:
\begin{equation}
\mathbb{E}_{{\{\bf a\}}} \left\{\gamma_{jk}^{j}\right\} \geq  \beta_{jk}^{j}  \left( 1 +  \frac{1}{\zeta} \sum\limits_{l \in \vect{\Psi}_\lambda \setminus \{j\} }  \frac{\beta_{l k}^{j}}{\beta_{l k}^{l}} +   \frac{1}{\tau_p \mathsf{SNR}_{\tr}} \right)^{-1},
\end{equation}
and 
%the final expectation on UE positions, i.e., 
 $\mathbb{E}_{{\{\bf d\}}} \left\{\sum\limits_{l \in \vect{\Psi}_\lambda \setminus \{j\} } \frac{\beta_{l k}^{j}}{\beta_{l k}^{l}} \right\}$ can be computed using \cite{fahim, pizzo2018network}. %[Appendix~B]

\begin{center}%\label{App.B}
\textbf{Appendix B. Proof of Lemma \ref{lemma_increasingNMSE}}
\end{center}
In order to prove that NMSE is  a non-decreasing function of $\lambda$, it is sufficient to show 
$\frac{\partial \mu_{\kappa}}{\partial \lambda} > 0, \kappa = 1, 2$ as follows\footnote{We prove it for both $\kappa = \{1, 2\}$, because we need to know the behavior of $\mu_{1}$ and $\mu_{2}$ for other analyses.}:
\begin{eqnarray}
\frac{\partial \mu_{\kappa}}{\partial \lambda}  = \sum\limits_{n=1}^N \eta_{1, n}(\kappa)  + \eta_{2, n}(\kappa)  +\eta_{3, n}(\kappa), \kappa = 1, 2,
\end{eqnarray}
with:
%\vspace{-.6cm}
\begin{eqnarray}
\nonumber
\begin{cases}
\eta_{1, n}(\kappa) = 2\pi^2 \lambda \frac{-r_{n-1}^4 e^{-\pi \lambda r_{n-1}^2} + r_{n}^4 e^{-\pi \lambda r_{n}^2}}{\kappa \alpha_n - 2}, \\
\eta_{2, n}(\kappa)  = \frac{2 c_n(\kappa)(1 - \frac{\kappa \alpha_n}{2})}{(\pi \lambda)^{\frac{\kappa \alpha_n}{2}}} \left(  \Gamma\left(1+\frac{\kappa \alpha_n}{2}, \pi \lambda r_{n-1}^2\right) -   \Gamma\left(1+\frac{\kappa \alpha_n}{2}, \pi \lambda r_{n}^2\right)\right), \\
 \eta_{3, n}(\kappa)  =  2 \pi^2 \lambda c_n(\kappa)  ( -r_{n-1}^{2 + \kappa \alpha_n} e^{-\pi \lambda r_{n-1}^2} + r_{n}^{2 + \kappa \alpha_n} e^{-\pi \lambda r_{n}^2}),
\end{cases}
\end{eqnarray}
whereas we use Leibniz's rule in the computation of $\eta_{1, n}(\kappa) $ and $\eta_{3, n}(\kappa)$.
\subsubsection*{\textbf{Investigation of the behavior of $\eta_{1, n}(\kappa)$}}
We define $f(r; \lambda)  \triangleq r^4 e^{-\pi \lambda r^2}, r \geq 0$.
%and illustrate  it in Fig.~\ref{fig:f(r)} for $\lambda = 20 {\rm [BS/km^2]}$. 
It is shown that $f(r; \lambda) \geq 0$ is a non-monotonic function.
We can easily prove this behavior through some computations on its derivative for each $\lambda$. We discover that:
\begin{align*}
\eta_{1, n}(\kappa) =  2\pi^2 \lambda \frac{-f(r_{n-1}; \lambda) + f(r_{n}; \lambda)}{\kappa \alpha_n - 2}, 
\end{align*}
is not a positive function for each $n$; however, we can prove that $\sum\limits_{n=1}^N \eta_{1, n}(\kappa) > 0$.
%\begin{eqnarray}\label{sum_eta1}
%\sum\limits_{n=1}^N \eta_{1, n}(\kappa) > 0.
%\end{eqnarray}
 If we pay attention to $\eta_{1, n}(\kappa)$ and  $\eta_{1, (n+1)}(\kappa)$ simultaneously, it is understood that $\frac{f(r_n; \lambda)}{\kappa \alpha_n - 2} > \frac{f(r_n; \lambda)}{\kappa \alpha_{n+1} - 2}$ due to that fact that, $\kappa \alpha_n - 2 < \kappa \alpha_{n+1} - 2$ and finally, we have $\sum\limits_{n=1}^N \eta_{1, n}(\kappa) > 0$. %\eqref{sum_eta1}.

\subsubsection*{\textbf{Investigation of the behavior of $\eta_{2, n}(\kappa)$}}
Obviously, $c_n(\kappa) \leq 0$ and because $\Gamma(a, x_1) > \Gamma(a, x_2)$ for $x_1 < x_2$, we find out that 
$\sum\limits_{n=1}^N \eta_{2, n}(\kappa) > 0$.
%\begin{eqnarray}\label{sum_eta2}
%\sum\limits_{n=1}^N \eta_{2, n}(\kappa) > 0.
%\end{eqnarray}

\subsubsection*{\textbf{Investigation of the behavior of $\eta_{3, n}(\kappa)$}}
Defining $ g(r; \alpha_n, \kappa) = r^{2 + \kappa \alpha_n} e^{-\pi \lambda r^2}$ and comparing it with the function $f(r; \lambda)$, we can realize that $ g(r; \alpha_n, \kappa) < f(r; \lambda)$. Considering
$|c_n(\kappa)| < 1$, we conclude that $|\eta_{3, n}(\kappa)| < |\eta_{1, n}(\kappa)|$.
%\begin{eqnarray}\label{eta3}
%|\eta_{3, n}(\kappa)| < |\eta_{1, n}(\kappa)|.
%\end{eqnarray} 
Finally, by considering above inequalities, we reach our aim.
%$\eqref{sum_eta1}, \eqref{sum_eta2} and \eqref{eta3}, we reach our aim.

%\begin{figure}%[t] 
%\centering
%\includegraphics[width=0.4\textwidth]{Figs/f_r.eps}
%\caption{Behavior of $f(r)$ as a function of $r$ for $\lambda = 20\ {\rm [BS/Km^2]}$.}\label{fig:f(r)}\vspace{-0.5cm}
%\end{figure}

\begin{center}%\label{App.C}
\textbf{Appendix C. Proof of Corollary \ref{corollary_optimalPRF}}
\end{center}
By taking the derivative of \eqref{R_inf} yields
\begin{eqnarray}
\frac{\partial R_{M\to \infty}}{\partial \zeta} = -\frac{K}{\tau_c} \log_2\left(1 + \frac{\zeta}{\mu_2}\right) + \frac{1 - \zeta K / \tau_c}{\mu_2 \ln(2) (1 + \zeta/ \mu_2)}.
\end{eqnarray}
Setting $ x = \frac{\tau_c/ K - \zeta }{\mu_2 + \zeta}$ we obtain
$ (x + 1) e^{x+1} = e (\frac{\tau_c}{K \mu_2} + 1)$, whose solution is found as $ x = W( e ( \frac{\tau_c}{K \mu_2} + 1)) - 1$.

\begin{center}%\label{App.D}
\textbf{Appendix D. Proof of Corollary \ref{corollary_ComparingReductionRate}}
\end{center}
To prove this, we compare the derivation of SINR w.r.t $\lambda$.
%\begin{eqnarray}\nonumber
%\!\!\! \frac{\partial {\mathsf{SINR}^{\rm UatF, MR}}}{\partial \lambda} \!\!\!\!\!\! &=& \!\!\! \frac{a_1 \frac{\partial A(\mu_1)}{\partial \lambda} + a_2 \frac{\partial \mu_1}{\partial \lambda} + a_3 \frac{\partial \mu_2}{\partial \lambda}}{\left({\frac{A(\mu_1)}{M \mathsf{SNR}_0}} + { \frac{K}{M}A(\mu_1)} + {\frac{K}{M}\left(A(\mu_1)\mu_1 + \frac{\mu_2}{\zeta}\right)} + {\frac{\mu_2}{\zeta}}\right)^2}, \\ \nonumber
%\!\!\! \frac{\partial {\mathsf{SINR}^{\rm UatF, ZF}}}{\partial \lambda} \!\!\!\!\!\! &=& \!\!\!\!\!\! \frac{b_1 \frac{\partial A(\mu_1)}{\partial \lambda} + b_2 \frac{\partial \mu_1}{\partial \lambda} + b_3 \frac{\partial \mu_2}{\partial \lambda}}{\left({\frac{A(\mu_1)}{(M-K) \mathsf{SNR}_0}} \!\! + \!\! { \frac{K}{M-K}(A(\mu_1)-1)} \!\! + \!\! {\frac{K}{M-K} A(\mu_1)\mu_1}\!\! + \!\! \frac{\mu_2}{\zeta}\right)^2}
%\end{eqnarray}
\begin{eqnarray}\nonumber
 \left|\frac{\partial {\mathsf{SINR}^{\rm UatF, MR}}}{\partial \lambda}\right|  &=&  \frac{a_1 \frac{\partial A(\mu_1)}{\partial \lambda} + a_2 \frac{\partial \mu_1}{\partial \lambda} + a_3 \frac{\partial \mu_2}{\partial \lambda}}{(1 / \mathsf{SINR}^{\rm UatF, MR})^2}, \\ \nonumber
 \left|\frac{\partial {\mathsf{SINR}^{\rm UatF, ZF}}}{\partial \lambda}\right| &=&  \frac{b_1 \frac{\partial A(\mu_1)}{\partial \lambda} + b_2 \frac{\partial \mu_1}{\partial \lambda} + b_3 \frac{\partial \mu_2}{\partial \lambda}}{(1 / \mathsf{SINR}^{\rm UatF, ZF})^2},
\end{eqnarray}
where the coefficients for MR and ZF are as follows:
\begin{align*}
\begin{cases}
a_1 = \frac{1}{M \mathsf{SNR}_0} + \frac{K}{M}(1 + \mu_1), 
a_2 = \frac{K}{M} A(\mu_1), 
a_3 = \frac{1}{\zeta} ( \frac{K}{M} + 1), \\
%\end{cases}
%\end{align*}
%\vspace{-.6cm}
%\begin{align*}
%\begin{cases}
b_1 = \frac{1}{(M-K) \mathsf{SNR}_0} + \frac{K}{M-K}(1 + \mu_1), 
b_2 = \frac{K}{M-K} A(\mu_1), 
b_3 = \frac{1}{\zeta}.
\end{cases}
\end{align*}
We know $M \gg K$ and as we mentioned before in Lemma \ref{lemma_increasingNMSE}, $\frac{\partial A(\mu_1)}{\partial \lambda} \geq 0$, $\frac{\partial \mu_1}{\partial \lambda} \geq 0$ and $\frac{\partial \mu_2}{\partial \lambda} \geq 0$.
Then it would be concluded that $a_1 < b_1$, $a_2 < b_2$ and $a_3 \approx b_3$. On ther other hand,  $\mathsf{SINR}^{\rm UatF, MR} \leq \mathsf{SINR}^{\rm UatF, ZF}$. These lead to the inequality $ \frac{\partial {\mathsf{SINR}^{\rm UatF, MR}}}{\partial \lambda} <  \frac{\partial {\mathsf{SINR}^{\rm UatF, ZF}}}{\partial \lambda}$.% which shows that when we employ ZF, SINR is decreasing faster than the case with MR.

%\printcredits

% Loading bibliography style file
\bibliographystyle{IEEEtran}%{cas-model2-names}

% Loading bibliography database
\bibliography{IEEEabrv,ref,refbook,ref1}

\end{document}